\pgfplotsset{compat=newest}
\newcommand{\stitle}[1]{\vspace{0.5ex}\noindent{\bf #1}}
\newcommand{\aj}[1]{\textcolor{blue}{[Alekh: #1]}}
\newcommand\vldbdoi{10.14778/3476249.3476298}
\newcommand\vldbpages{2505 - 2518}
\newcommand\vldbvolume{14}
\newcommand\vldbissue{11}
\newcommand\vldbyear{2021}
\newcommand\vldbauthors{\authors}
\newcommand\vldbtitle{\shorttitle} 
\newcommand\vldbpagestyle{empty} 
\newcommand{\eat}[1]{}
\newcommand{\msft}{Microsoft\xspace}
\newcommand{\cosmos}{Cosmos\xspace}
\newcommand{\scope}{SCOPE\xspace}
\newcounter{enum}
\newenvironment{packed_enum}{
	\begin{list}{\textbf{(\arabic{enum})}}{
			\setlength{\itemsep}{0pt}
			\setlength{\parskip}{0pt}
			\setlength{\labelwidth}{-4 pt}
			\setlength{\leftmargin}{0 pt}
			\setlength{\itemindent}{0pt}
			\usecounter{enum}}
	}{\end{list}}
\newcommand{\rev}[1]{{\color{black}#1}}
\begin{document}
\title{Phoebe: A Learning-based Checkpoint Optimizer}

\author{Yiwen Zhu}
\affiliation{%
  \institution{Microsoft}
  \streetaddress{1020 Enterprise Way}
  \postcode{43017-6221}
}
\email{yiwzh@microsoft.com}

\author{Matteo Interlandi}
\affiliation{%
	\institution{Microsoft}
	\streetaddress{One Microsoft Way}
}
\email{mainterl@microsoft.com}

\author{Abhishek Roy}
\affiliation{%
	\institution{Microsoft}
	\streetaddress{One Microsoft Way}
}
\email{abro@microsoft.com}

\author{Krishnadhan Das}
\affiliation{%
	\institution{Microsoft}
}
\email{krisdas@microsoft.com}

\author{Hiren Patel}
\affiliation{%
	\institution{Microsoft}
}
\email{hirenp@microsoft.com}

\author{Malay Bag}
\affiliation{%
	\institution{Facebook}
}
\email{malayb@gmail.com}

\author{Hitesh Sharma}
\affiliation{%
	\institution{Google}
}
\email{hitesh@outlook.com}

\author{Alekh Jindal}
\affiliation{%
	\institution{Microsoft}
}
\email{aljindal@microsoft.com}

\begin{abstract}

Easy-to-use programming interfaces paired with cloud-scale processing engines have enabled big data system users to author arbitrarily complex analytical jobs over massive volumes of data.
However, as the complexity and scale of analytical jobs increase, they encounter a number of unforeseen problems, hotspots with large intermediate data on temporary storage, longer job recovery time after failures, and worse query optimizer estimates being examples of issues that we are facing at Microsoft.

To address these issues, we propose Phoebe, an efficient learning-based checkpoint optimizer. 
Given a set of constraints and an objective function at compile-time, Phoebe is able to determine the decomposition of job plans, and the optimal set of checkpoints to preserve their outputs to durable global storage. Phoebe consists of three machine learning predictors and one optimization module.
For each stage of a job, Phoebe makes accurate predictions for: (1) the execution time, (2) the output size, and (3) the start/end time taking into account the inter-stage dependencies. %
Using these predictions, we formulate checkpoint optimization as an integer programming problem and propose a scalable heuristic algorithm that meets the latency requirement of the production environment.

We demonstrate the effectiveness of Phoebe in production workloads, and show
that we can free the temporary storage on hotspots by more than $70\%$ and 
restart failed jobs $68\%$ faster on average with minimum performance impact.
Phoebe also illustrates that adding multiple sets of checkpoints is not cost-efficient, which dramatically reduces the complexity of the optimization. 

\end{abstract}

\maketitle

\pagestyle{\vldbpagestyle}
\begingroup\small\noindent\raggedright\textbf{PVLDB Reference Format:}\\
\vldbauthors. \vldbtitle. PVLDB, \vldbvolume(\vldbissue): \vldbpages, \vldbyear.\\
\href{https://doi.org/\vldbdoi}{doi:\vldbdoi}
\endgroup
\begingroup
\renewcommand\thefootnote{}\footnote{\noindent
This work is licensed under the Creative Commons BY-NC-ND 4.0 International License. Visit \url{https://creativecommons.org/licenses/by-nc-nd/4.0/} to view a copy of this license. For any use beyond those covered by this license, obtain permission by emailing \href{mailto:info@vldb.org}{info@vldb.org}. Copyright is held by the owner/author(s). Publication rights licensed to the VLDB Endowment. \\
\raggedright Proceedings of the VLDB Endowment, Vol. \vldbvolume, No. \vldbissue\ %
ISSN 2150-8097. \\
\href{https://doi.org/\vldbdoi}{doi:\vldbdoi} \\
}\addtocounter{footnote}{-1}\endgroup

\section{Introduction}
\label{sec:intro}

\begin{figure}[!t]
	\centering
	\vspace{0.4cm}
	\includegraphics[width=0.75\columnwidth]{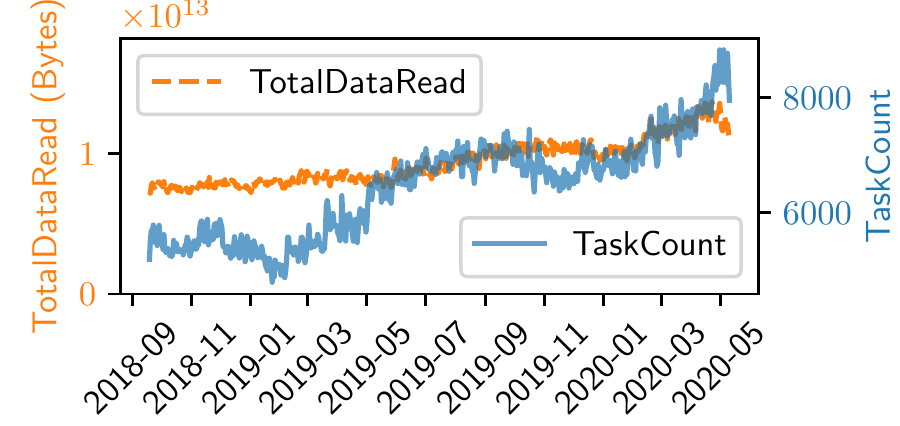}
	\vspace{-0.3cm}
	\caption{\cosmos job size}\label{fig:trend}
	\vspace{-0.5cm}
\end{figure}

Big data platforms have democratized scalable data processing over the last decade, giving developers the freedom of writing complex programs (also referred to as \emph{jobs}) without worrying about scaling them~\cite{zaharia2010spark, thusoo2009hive,tigani2014google,chaiken2008scope,diaz2018azure,ramakrishnan2017azure,aws-athena, sarkar2018learning}. 
However, this flexibility has also led developers into building very large analytical programs that can put the underlying platform under stress.
For instance, the scale and flexibility of \textit{\cosmos}
~\cite{curino2019hydra,shao2019griffon,chaiken2008scope}, 
a big data analytics platform at \msft,
empower developers to author large jobs composed of pipelines of SQL-like query statements which are compiled into query plans composed of up to thousands of \emph{stages} (executable units composed by one or more operators) running over hundreds of thousands of processing units scheduled by YARN~\cite{vavilapalli2013apache}.
Figure~\ref{fig:trend} shows how the \cosmos workloads in one of the clusters have evolved over the past two years: we see that the 
total number of \emph{tasks} per job (each corresponding to one process executed in one container) \rev{has} grown by $34\%$ (shown in blue), while the
volume of input data has grown by $80\%$ (shown in orange).
Large analytical jobs lead to several operational problems.

\stitle{1. Large jobs result in machine hotspots that run out of local storage space due to temporary data.} Big data systems typically persist intermediate outputs on local SSDs until the end of the query. However, large query plans end up consuming a substantial amount of temporary storage. 
Figure~\ref{fig:ssd} (left) shows, for one \cosmos cluster, the cumulative distribution of available local SSD storage that is used for storing temporary data.
We can see that for different Stock Keeping Units (SKUs), $15-50\%$ of the machines run out of local storage on SSDs. %
This results in not only expensive spilling to HDDs and hence processing slowdown, but also an increase in incidents reporting job failures due to SSD outages. %
Currently, to avoid the SSD shortage, we need to either 
cap the number of containers running on each machine (thus wasting expensive CPU and memory resources)~\footnote{Alternate solution could be to make YARN scheduler aware of the SSD utilization. However, additional parameters are not only harder to tune cluster-wide but also increase the scheduling overhead, which could
translate into high costs at scale.}, or 
scale the CPU and memory together with the temporary storage in the newer SKUs.

\stitle{2. Large jobs are prone to longer re-starting time in case of failures.} 
Figure~\ref{fig:failure} (right) shows the failure rates of jobs with increasing runtimes. {\color{black}We observe that even though a majority of the jobs finish within $20$ minutes, the failure rates in larger jobs could range as high as $5\%$.} Network/communication failures, changes in cluster conditions, transient system behavior, and user errors are some of the common reasons for job failures in Cosmos.
Even though Cosmos already provides a lineage-based mechanism~\cite{rdd} for coping with task failures, improving resiliency to job failures and reducing recovery time are very important, especially as jobs and workloads scale.

\stitle{3. Large jobs end up having worse query optimizer estimates.} Errors in cardinality propagate exponentially~\cite{moerkotte2009preventing,neumann2013taking}, and hence complex jobs are more likely to produce poor query plans. A recent trend~\cite{rios,adptive-query-spark} suggests to re-optimize plans adaptively during job execution, but collecting statistics on-the-fly and on distributed intermediate results is highly not trivial and requires \rev{a} major overhaul of the runtime system.

\rev{
\stitle{Checkpointing and its challenges.} 
The above problems can be solved by ``decomposing'' large jobs into smaller ones that are separated by persistent (e.g., with 3-way replication) checkpoints on durable global storage.
\eat{Our observation is that the above problems can all be solved with a simple, yet effective technique: \textit{checkpointing intermediate state to a durable global storage}. In fact, by ``decomposing'' large jobs into smaller ones that are separated by persistent (e.g., with 3-way replication) checkpoints, }
This will allow, for instance, to (1) free up intermediate data on hotspots even before the job completes; (2) fast restart failed jobs from the previous state; (3) collect statistics on the checkpoints and re-optimize large jobs into smaller ones with better estimates; and (4) reduce intermediate data in local storage to avoid wasting resources in newer SKUs. 
Prior checkpointing approaches include gathering statistics at execution time to dynamically select when to checkpoint~\cite{daly2006higher,sharma2016flint,yan2016tr}, however, they require additional dynamic components that are not easy to implement reliably in large production systems such as Cosmos.
Alternatively, compile-time approaches use estimates to propose optimal checkpoints during query optimization~\cite{upadhyaya2011latency,yang2010osprey,chen2013selective}.
However, this requires accurate cost estimates, which is challenging since query optimizer estimates are often off by several orders of magnitudes~\cite{dutt2019selectivity,MCSN,siddiqui2020cost}, and even learned approaches are good for only relative plan comparisons~\cite{marcus2019plan,marcus2019neo,ortiz2018learning} while still being significantly off in absolute values.
Furthermore, all previous checkpointing approaches considered relatively small tree-shaped plans, whereas modern big data systems like SCOPE easily have complex Direct Acyclic Graphs (DAGs) with thousands of operators~\cite{microlearner}.
Not to mention, we need a generalized framework that can make checkpointing decisions on these DAGs for different scenarios with different objectives and constraints.
}

\eat{
However, checkpointing is not free. It comes with I/O costs for moving intermediate data to durable global stores, storage costs on the global store, latency impact on the analytical job, and lack of debuggability because of intermediate data being cleared up. 
Furthermore, there are system challenges to transparently integrate checkpointing mechanism within the big data system.

In the literature, two types of checkpointing approaches are prominent. \textit{Runtime-based approaches} use information gathered at execution time to dynamically select when to checkpoint~\cite{daly2006higher,sharma2016flint,yan2016tr}. Runtime-based checkpointing requires gathering of statistics during execution as well as dynamic components which are not easily implementable in production systems such as Cosmos. 
Conversely, \textit{compile-time approaches} leverage information at the job level to propose optimal checkpoints to the system~\cite{upadhyaya2011latency,yang2010osprey,chen2013selective}.
This approach does not require any dynamic component and is therefore the solution we pick in this work. 
The challenge then is to select, efficiently and upfront, the set of intermediate data to checkpoint given a global storage budget.
}

\eat{

We evaluate some of those impacts in the production environment in this work.

In the literature, two types of checkpointing approaches are prominent. \textit{Runtime based approaches} use information gathered at execution time to dynamically select when to checkpoint~\cite{}. Conversely, \textit{compile-time based approacher} leverage information at the job level to proposal optimal checkpoint to the runtime~\cite{}.
In this work we 

\stitle{Challenges.}
The challenge, however, is to select the set of intermediate data to checkpoint given a global storage budget.

In this work, we demonstrate that a proactive {\it checkpoint} strategy at the compile time could help free up more than $70\%$ of the local storage on these hotspots and reduce the recovery time for these failed jobs by $64\%$ on average through proper modeling and optimization. However, note that checkpointing is not free; it comes with I/O costs for moving intermediate data to persistent stores, storage costs on the persistent store, latency impact on the analytical job, and lack of debuggability in case intermediate data is cleared up. We evaluate some of those impacts in the production environment in this work.
}

\begin{figure}[t]
	\centering
	\begin{minipage}[b]{0.5\columnwidth}
		\includegraphics[width=\textwidth]{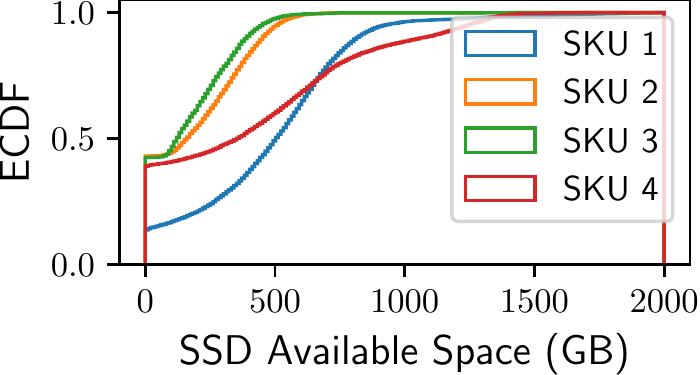}
	\end{minipage}%
	\begin{minipage}[b]{0.5\columnwidth}
		\includegraphics[width=\textwidth]{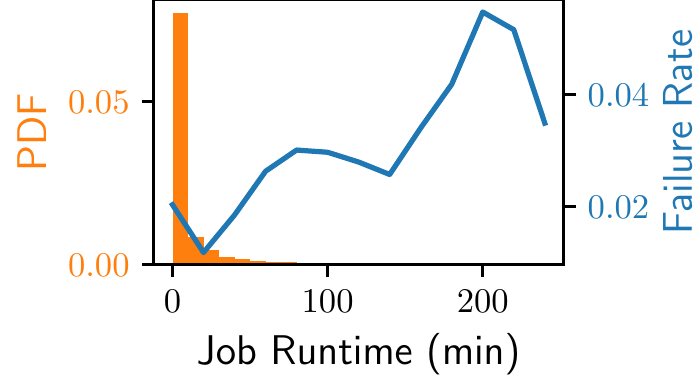}
	\end{minipage}%
	\vspace{-0.35cm}
	\caption{The Empirical Cumulative Density Function (ECDF) for SSD usage (left) and job failure rate with respect to job runtime and the probability density function (PDF) of job runtime distribution (right).}\label{fig:failure}\label{fig:ssd}
	\vspace{-0.4cm}
\end{figure}

\eat{
\begin{itemize}
	\item \textbf{The performance of checkpoint optimizer has to be very good to justify the action of flushing.} The solution of flushing to HDD can be costly due to the high expense of global storage on HDD. In the current implementation, the data in global storage will be replicated by three times and will not be deleted for a longer time, while the temp data storage used by a particular query job will always be cleared right after the query finishes. Therefore, it is desired that the optimizer is able to select the optimal set of checkpoints that maximizes the saving on temp data storage from the pre-checkpoint stages while using limited global storage.
	\item \textbf{Developing accurate cost models for the optimizer is a well-known difficult problem.} In the production environment, customer workloads may have significant different characteristics with various User Defined Operators (UDOs), and the noisy neighbors on the same server affect the latency dramatically. To estimate the potential saving/usage for the storage, the detailed execution schedule (starting time and ending time) and the output size of each stage are required as inputs, which
	is well-known a challenging task
	given limited information about the query and its input data at the compiling time.
	\item \textbf{There are various practical concerns for the deployment of such pipeline in production.} (i) Integration: in the analyzed system, we rarely have any ML-based application that is applied at a large scale at the compiling time. Therefore, the detailed design on the model training and deployment pipeline is critical. (ii) Latency: in the production system, the ML training and testing pipelines should be with low latency with respect to the frequency needed for retraining. The optimization should be also solved with low latency at the compiling time. (iii) Data quality: there are data quality issues that need to be properly addressed.
\end{itemize}
}

\stitle{Introducing Phoebe.}
In this paper, we present Phoebe, a learning-based checkpointing optimizer for determining, at the compile-time, the decomposition (or the ``cuts'') of large job 
graphs in big data workloads.
\rev{Phoebe builds upon the state-of-the-art CLEO~\cite{siddiqui2020cost} cost models
and {\it fine-tunes} its operator-level predictions with historical statistics from past executions.
This is possible since production big data workloads are often recurrent, e.g., $>70\%$  in \cosmos~\cite{jyothi2016morpheus}.
Furthermore, checkpointing decisions only require cost predictions at stage boundaries (i.e., a set of operators that process a partition of data on a given node),
which are precise since stages get executed with physical boundaries%
, thus making the fine-tuning approach highly effective.
Phoebe applies a similar fine-tuning when predicting the time to live (TTL) for the output of each stage,
which is needed to estimate how long the data lives on temporary storage.
Phoebe uses a job runtime simulator and then fine-tunes its estimates with historical TTL of stage outputs.
}

\rev{Apart from cost models, Phoebe also introduces a scalable heuristics based checkpointing algorithm, that (1) can scale to millions of jobs in Cosmos workload; (2) it is two orders of magnitude faster than the optimal Integer Programming (IP) approach; and (3) yet strikingly close to the optimal in meeting the objective value.
Finally, to the best of our knowledge, Phoebe is the first checkpointing framework that supports multiple objectives and constraints, and hence could be used in several different scenarios. 
}

\eat{
{\color{black}The checkpoint mechanism of Cosmos is implemented at the {\it stage-level}\footnote{A stage consists of a set of operators physically executing on the same machine.} and with explicit capacity constraints for the global storage (which is more costly) where the 
output of checkpoint stages will be stored. Stage-level predictors developed in Phoebe turn out to be highly accurate since they train on metrics from physically executed tasks with explicit materialization boundaries. 
They also do not suffer from the exponential propagation of prediction errors, as seen in traditional cardinality estimation~\cite{moerkotte2009preventing,neumann2013taking}.

The core methodological contribution in Phoebe are two-fold.

First, for the cost estimators, to evaluate the potential benefits of the checkpointing solution in the optimizer, detailed information of stage output sizes and scheduling (i.e., start/end times) is required. Specifically, the intermediate output data will be saved on temp data storage locally till the end of the job. For the application of saving temp data storage, a Time-to-Live (TTL) estimator for the lifetime of the temp data is needed that takes into account the full execution process of the job. Compared with cardinality estimation to be used as input to the query optimizer (e.g.,~\cite{marcus2019plan,marcus2019neo,ortiz2018learning}), the estimation for TTL is more challenging as it requires always to consider the full execution plan, as oppose to a sub-tree. And the accuracy requirement is higher as most of the cardinality estimation only requires to have the same relative order for the estimation of different plans. For SCOPE~\cite{chaiken2008scope,zhou2012scope} jobs that have large and complex query plans, the estimation for the cost is even more challenging due to error prorogation issues~\cite{moerkotte2009preventing, salama2015cost,neumann2013taking}. In this paper, we propose a plethora of innovative ideas to achieve the best accuracy:
\begin{itemize}
\item We build upon a state-of-the-art cost model for SCOPE jobs, i.e., CLEO~\cite{siddiqui2020cost} (which has improved the default SCOPE cost model by orders of magnitude), and use its operator-level estimation as input that automatically captures workload characteristics such as Input Cardinality, AverageRowLength, and Number of Partitions, for each operator. 

\item Given that majority of the workloads for \cosmos are recurrent (>70\%), we leverage historical statistics from previous runs as one of the input to the estimator and only consider recurrent jobs for checkpointing. The analysis shows that it is the combination of historical statistics, and job-instance features that leads to better accuracy.
Table~\ref{table:job} shows the average and coefficient of variance (CV, defined by the ratio of standard deviation to the average) of data size and runtime for 5 job templates. We can see that for some jobs, the variation of runtime and data read/write is large (with CV>1). Therefore, a good predictor should both consider the historic statistics, and job-instance features.

\aj{could we say we are instance optimizing operator-level models with stage-level physical execution time in the workload telemetry?}

\begin{table}[t]
	\caption{{\color{black}Example of Recurrent Job Templates}}\label{table:job}
	\begin{tabular}{lllllll}
		\toprule
		& \multicolumn{2}{l}{Job Runtime (s)} & \multicolumn{2}{l}{Data Read (byte)} & \multicolumn{2}{l}{Data Write (byte)} \\
		& Average            & CV            & Average             & CV            & Average             & CV             \\\midrule
		Job A & 240.54             & 1.17           & 1.01E+11            & 2.01           & 8.42E+10            & 1.93            \\
		Job B & 65.79              & 0.10           & 3.66E+06            & 1.95           & 1.46E+07            & 1.96            \\
		Job C & 6848.13            & 1.92           & 1.68E+13            & 0.94           & 5.18E+12            & 1.84            \\
		Job D & 749.58             & 0.52           & 1.96E+13            & 0.53           & 1.00E+10            & 0.53            \\
		Job E & 44.65            & 0.16           & 412            & 0           & 412            & 0       \\
		\bottomrule    
	\end{tabular}
\end{table}
\item Compared to prior work that focus on much simpler query plans~\cite{marcus2019plan, marcus2019neo,ortiz2018learning,mou2016convolutional,MCSN,akdere2012learning}, a SCOPE job can easily have hundreds or thousands of stages and hours' long execution time. The existing Deep Neural Net (DNN) solutions (e.g.,~\cite{MCSN,DBLP,marcus2019neo,NeuroCard,mou2016convolutional,marcus2019plan,ortiz2018learning}) are not applicable for large query plans due to the nature of gradient descend algorithm that are used in the training process. And the error prorogation issue for most of the existing cost models~\cite{moerkotte2009preventing, salama2015cost,neumann2013taking} renders them inapplicable. In this work, we propose an explainable simulation process given the stage-level cost that has proved to have better accuracy for large jobs. A detailed comparison of the cost estimation accuracy from CLEO and Phoebe is shown in Section~\ref{sec:result}.
\end{itemize}

Second, this paper is the first to efficiently considers multiple dimensions of constraints/objectives in the checkpoint optimizer, such as the trade-off of using the more costly global storage versus the storage saving for temp data, or the corresponding recovery time saving. A checkpoint optimization problem is typically formulated as a graph optimization problem, where stages on the the ``frontier''of the lineage graph~\cite{sharma2016flint,yang2010osprey} will be checked\footnote{{\color{black}The decision is made at the compile time instead of during the runtime of the job, while the latter still requires dynamic decision making modules and cost estimators.}}. Existing checkpoint optimization mechanisms only focus on one dimension, such as minimizing runtime overheads~\cite{daly2006higher,sharma2016flint,upadhyaya2011latency}, which is a way simpler problem, or rely on brutal-force algorithms to inefficiently search for the optimal solution~\cite{chen2013selective,salama2015cost,yang2010osprey}. In this paper, we explicitly consider the storage cost (both in size and time dimensions) and the trade-off for using more expensive global storage and provides an efficient algorithm for solving it. 
	}}

{\color{black}To summarize, we make the following key contributions:
\begin{itemize}
 \item We present Phoebe, a learning-based system that uses past workloads for making checkpoint decisions over future queries in big data workloads. (Section~\ref{sec:overview})
 \item We describe accurate stage-wise cost predictors for stage output size, stage runtime, and stage output TTL by fine-tuning over historical statistics seen in the past. (Section~\ref{sec:module1})
 \item  We introduce a scalable checkpoint optimization algorithm for large query DAGs and global constraints over hundreds of thousands of jobs, that can support several different checkpointing scenarios. (Section~\ref{sec:opt})
	\item We evaluate Phoebe over large production Cosmos workloads, and show how the various components contribute towards picking good checkpoints as well as the involved trade-offs. 
	Our results show that Phoebe can free more than $70\%$ of the local storage on hotspots, and reduce the recovery time for failed jobs by $64\%$ on average, while increasing job latency by less than 3\%. (Section~\ref{sec:result})
\end{itemize}
}

Below we first provide a background on Cosmos and SCOPE before presenting each of our contributions.

\section{Background}\label{sec:motivation}

\cosmos is the state-of-the-art big data analytics platform at \msft. 
It consists of hundreds of thousands of machines
executing hundreds of thousands of
jobs per day~\cite{ramakrishnan2017azure,shao2019griffon}.
\cosmos users submit their analytical jobs using \scope~\cite{chaiken2008scope,zhou2012scope},
a SQL-like data flow dialect. \scope{} jobs are compiled into a Direct Acyclic Graph (DAG) of stages which in turn are executed in parallel by a YARN-based scheduler~\cite{curino2019hydra}.

Figure~\ref{fig:graph} shows the execution plan for a $7$-stage \scope job, with each rectangle representing a stage. Within a stage, there can be multiple operators, such as 
\verb|Extract|, \verb|Filter|, 
etc., chained together. Each stage is packed into a {\it task} that runs in parallel 
on different data partitions on different machines. 
\begin{figure}[t]
	\centering
	\includegraphics[width=0.75\columnwidth]{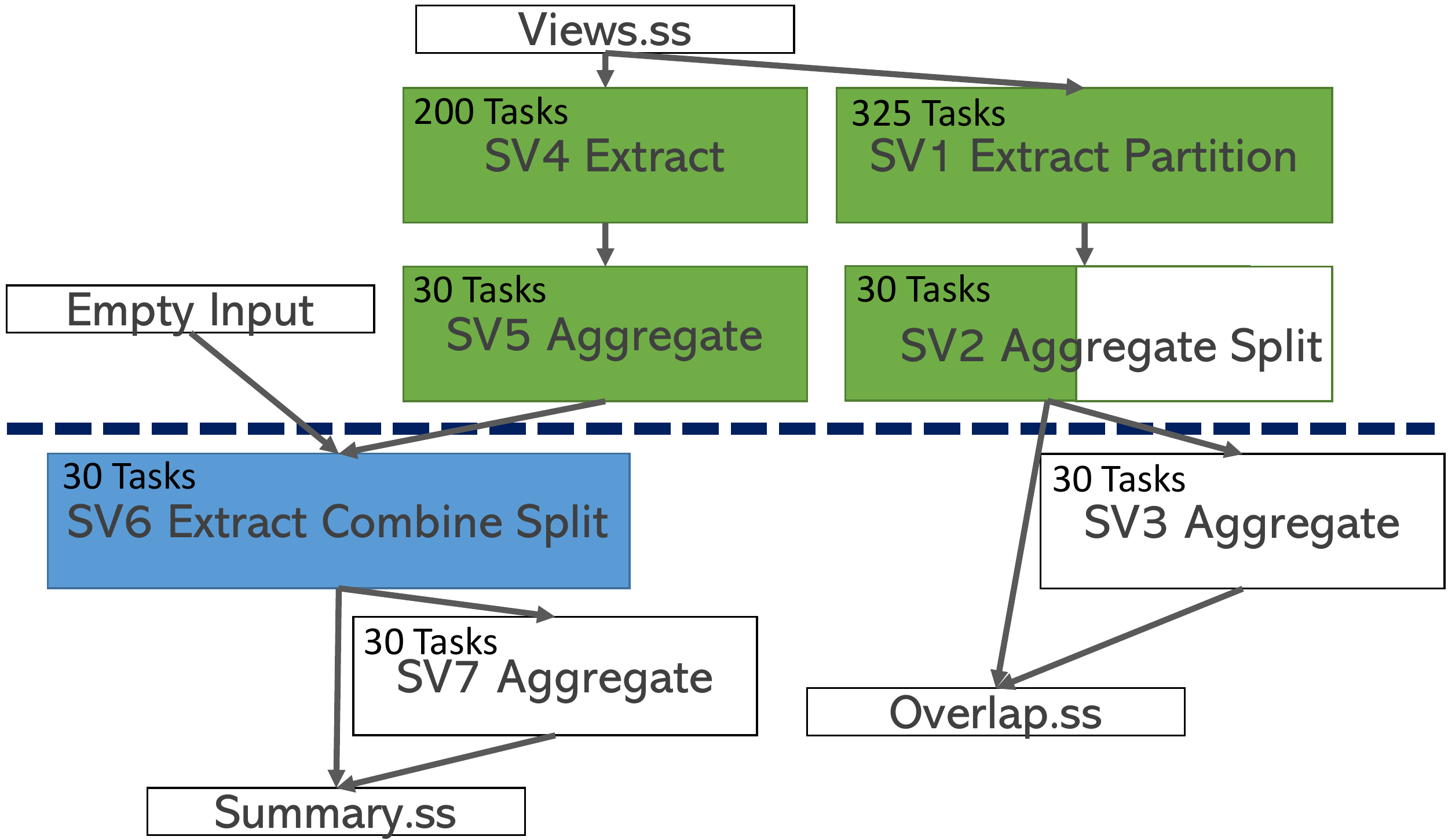}
	\vspace{-0.15cm}
	\caption{A job execution graph in \scope and one potential checkpoint decision (horizontal line).}\label{fig:graph}
	\vspace{-0.35cm}
\end{figure}
During the execution, users can monitor the progress of each stage (green means finished, blue means waiting, and white means not started). 
Based on the execution dependency, we call the dependee an \textit{upstream stage} and the dependent a \textit{downstream stage}. For instance, stage \verb|SV2_Aggregate_Split| in Figure~\ref{fig:graph} is an upstream stage of \verb|SV3_Aggregate| and a downstream stage of \verb|SV1_Extract_Partition|. An upstream stage usually (but not always) finishes before its downstream stage. 
When a stage finishes, its output will be saved to the local SSDs of each server, which we refer to as \textit{Temp Data Storage} in the rest of the paper.
\cosmos emits telemetry data recording not only the detailed execution plan for every single job, but also the schedule for every stage, its execution time and output/input size, type of operators involved, etc. 

Given the motivation to create persistent checkpoints, we want to 
decompose a job execution graph by selecting a set of stages for checkpointing, and redirecting their outputs to global HDD storage with 3 replicas.
We refer to these selected stages as {\it checkpoint stages}.
\emph{We want to select the checkpoint stages carefully such that the objective} (e.g., minimizing temp storage load) \emph{could be met, while constraining the global storage costs}. 
Finding the optimal checkpoint stages is similar to decomposing the execution plan and finding a cut in the graph. 
The dashed black line in Figure~\ref{fig:graph} illustrates an example cut in the execution graph for checkpoint selection. When we select Stages \verb|SV_2| and \verb|SV_5| as the checkpoint stages, 
we need to save the their outputs to global persistent stores. 
The space needed for the global store is {\color{black}proportional} to the sum of the output sizes of Stages \verb|SV_2| and \verb|SV_5|.

Objectives such as minimizing the overall temp data storage of all jobs, or minimizing the recovery/restart time for a failed job, 
depend on the time a job lives $t_u$ after each stage $u$, while the global storage constraints depend on the output size $o_u$ of each stage.
Furthermore, $t_u$, is a function of the runtime, $r_u$, of all stages in the execution graph, i.e., $t_u = f(r_1,r_2,...r_k)$. 
Accurate estimations of $o_u$, $r_u$, and $t_u$ are therefore crucial for a good checkpoint optimization.
In the following sections, we first present an overview of Phoebe, then we will describe the three ML models used for stage-wise costs ($o_u$ and $r_u$) and time-to-live predictions ($t_u$), respectively. %

\section{Phoebe Overview}\label{sec:overview}

{\color{black} In this section we give an overview of Phoebe and highlight the design choices we have made.}
As discussed in the previous section, to determine the optimal cut(s) of an execution graph, 
it is important to estimate the output size, the runtime, and the time-to-live for each stage.
Unfortunately, the estimates the query optimizer in big data systems are off by orders of magnitude~\cite{microlearner}, due to 
(1) large query execution DAGs where the errors propagate exponentially~\cite{moerkotte2009preventing, salama2015cost}; (2)
prevalent use of custom {\color{black}user-defined functions that are hard to analyze~\cite{wu2018towards}; (3) recent works have exploited workload patterns to learn models for improving the cardinality estimates~\cite{wu2018towards,dutt2019selectivity,MCSN}, but still these learned estimates are not accurate enough in absolute values; and (4)}
the presence of both structured and unstructured input data~\cite{triou2020analyzing}.
{\color{black} \emph{\underline{Problem}: state-of-the-art cardinality estimation approaches are not good enough for predicting actual output sizes. \underline{Design choice}: Phoebe augments state-of-the-art learned cardinalities (i.e., CLEO~\cite{siddiqui2020cost}) by focusing on recurring jobs and exploiting historical statistics to instance-optimize the cardinality predictors.}}

Previous work on estimating cardinalities focus on improving the query optimizer estimates at the operator level. 
For checkpointing, however, we need to: {\color{black}{(1)}} estimate the costs at the stage-level, %
each consisting of multiple operators executing on a task in the same container; {\color{black}{(2)}} operators within a stage could be pipelined in different ways when scheduled on distributed tasks, which makes it non-trivial to combine individual operator costs into stage costs; {\color{black}{(3) stage outputs are persistent for the full duration of the job, therefore to estimate the storage costs we need to take into account this temporal dimension. 
\emph{\underline{Problem}: cardinality estimates at the operator level need to be aggregated at the stage level and augmented with a time dimension in order to properly model the storage cost. \underline{Design choice}: Phoebe generates stage-level estimates starting from the operator-level one, and adds a predictor for the time-to-live of each stage.}}}

{\color{black}
SCOPE-like big data engines have query plans that are DAGs of operators, not trees. Furthermore, Scope plans are complex: in our production workloads we have plans easily reaching thousands of operators. Prior works (e.g., MCSN~\cite{MCSN}, DeepDB~\cite{DBLP}, NEO~\cite{marcus2019neo}, NeuroCard~\cite{NeuroCard}, TBCNN~\cite{mou2016convolutional}, and~\cite{marcus2019plan,ortiz2018learning}) suggest to use DNNs to ``learn'' the encoding of relatively simple query structures and mapped each operator to neural unit(s). \emph{\underline{Problem}: Mapping Scope complex plans into deep neural networks results in severe gradient explosion or vanishing problems~\cite{resnet}. \underline{Design choice}: Phoebe captures the complex structure of big data query execution DAGs using a schedule simulator.}
}
Therefore, in this work, instead of a full black-box approach, we combine the existing work of cardinality estimation with an explainable simulation process, which is a judicious mixture of domain knowledge and principled data-science that leads to optimal results tailored to our complex production workloads. 

The checkpoint optimizer (Section~\ref{sec:opt}) uses the above estimates to make the checkpoint decisions.
 {\color{black}\emph{\underline{Problem}: Production checkpointing applications may have different objectives while the traditional checkpointing frameworks are rigid. \underline{Design choice}: Phoebe checkpointing algorithm is based on a ``graph cut'' algorithm that is adaptive to different objectives and constraints based on the specific application.}}

\begin{figure}[t]
	\centering
	\includegraphics[width=\columnwidth]{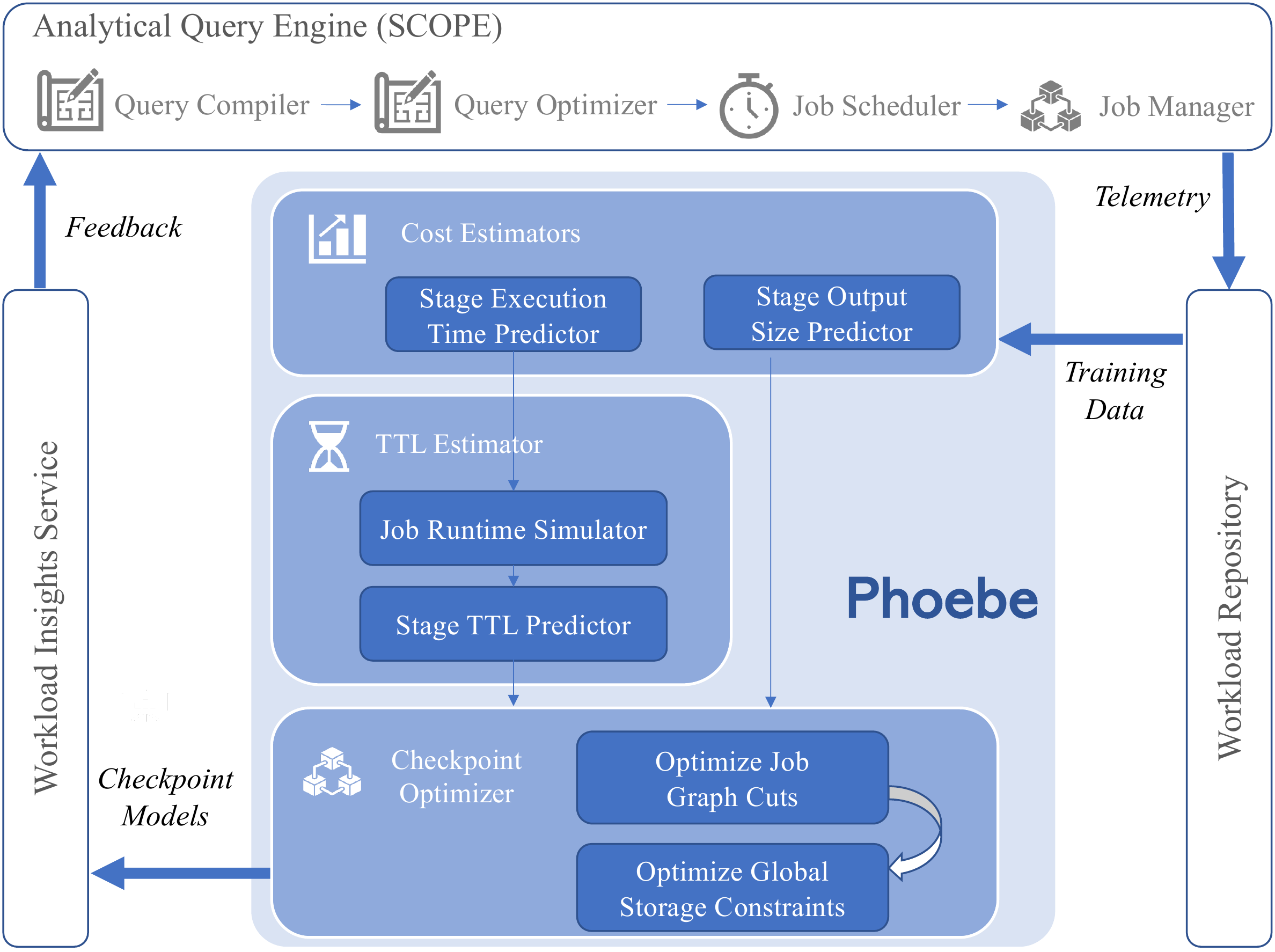}
	\vspace{-2ex}
	\caption{Phoebe architecture and its integration within the workload optimizer in \scope.}\label{fig:arch}
	\vspace{-0.6cm}
\end{figure}

Figure~\ref{fig:arch} shows the Phoebe architecture that is integrated with an already deployed workload optimization platform, namely Peregrine~\cite{jindal2019peregrine}. Phoebe consists of the following three modules:
\begin{packed_enum}
\item \textbf{The stage cost models} take as input the aggregated features at the stage level and uses machine learning methods to predict the duration of each stage, which is measured by the \textit{average} execution time for \textit{all} the tasks of the corresponding stage.
Likewise, we also learn models 
to predict the output size of each stage, i.e., the size of the output of the last operator in the stage. 

\item \textbf{The time-to-live (TTL) estimator} consists of two steps. First, a job runtime simulator infers the start/end times for each stage based on the job execution graph.
We assume a stage can only start once all its upstream stages have finished. TTL can be calculated as the time interval between the estimated stage end time to the estimated job end time. Similar to model stacking, we use a meta-learning model to further improve the TTL prediction from the simulator, i.e., we take the estimated TTL and the estimated time from start (TFS, defined as the time interval from the job start time to the start time of a stage) from the simulator, and use another machine learning model to generate the final TTL prediction.

\item \textbf{The checkpoint optimizer} uses as input the previous two modules, namely the estimated TTL and the output size of each stage, and selects the optimal set of global checkpoints given a particular objective function. To reduce the computation time for large workload sizes and to apply the storage constraints dynamically at runtime, we introduce a \rev{two-phase} approach to find the graph cuts and apply the storage constraint separately.
\end{packed_enum}

For a new job, the \scope compiler makes a call to the Workload Insight Service~\cite{jindal2019peregrine},
determines the checkpoint stages, modifies the query plan for materialization (similar as in CloudViews~\cite{jindal2018computation}),
and sends the information to the job manager, which takes care of checkpointing those stages to the global store. 
The telemetry data from the query engine is collected into a workload repository and later used by Phoebe to re-train the models.

\section{Stage-Level Predictors}\label{sec:module1}

In this section, we discuss the stage-level predictors,
namely the predictors for execution time and output size (Section~\ref{sec:output-stage})
and the predictor for TTL (Section~\ref{sec:ttl}).

\eat{
 that model the stage output size (Section~\ref{sec:output-stage}), and the execution time (Section~\ref{sec:ttl}), respectively. %
Both predictors use the same model structure and are based on two sets of features: (1) the cardinality estimates from the state-of-the-art query optimizer at the operator level~\cite{jindal2018computation,jindal2019peregrine,siddiqui2020cost}; and (2) the historic statistics of repetitive occurrences of a job template. %
{\color{black}This latter set of features allow to leverage past workload telemetry to calibrate the models, therefore providing instance-optimized models.
We developed both a traditional machine learning (ML) models as well as a Deep Neural Net (DNN) model as benchmark. }
Below we first detail the features used for our predictors (Section~4.1); afterwards we will describe the models (Sections~4.2 and 4.3).
}

\subsection{Execution Time \& Output Size Estimators}\label{sec:output-stage}

\subsubsection{Input Features}

{\color{black} 
Cosmos implements a state-of-the-art query optimizer and learned cost models, CLEO~\cite{siddiqui2020cost}. 
CLEO generates a collection of cost models, one for each common sub-graph in the plans. Each sub-graph corresponds to the same root physical operator (e.g.,
	\verb*|Filter|) and all upstream operators (e.g., \verb*|Scan|). The rationale is that cloud workloads are quite diverse in nature, and historically the one-fits-all models have failed to improve the estimated costs. CLEO~\cite{siddiqui2020cost} has proved to have 2 to 3 orders of magnitude more accurate than existing approaches. 
In this work, we leverage CLEO and extend it in three ways. }

{\color{black} 
\begin{packed_enum}
\item We use CLEO operator-level features as input to generate stage-level estimates. Stage-level estimates do not correspond to any sub-graphs, but they are estimates of all the operators combined into a stage by the SCOPE optimizer. These input features are directly accessible from the SCOPE optimizer. 
\item We use historical data coming from the previous occurrences of the recurrent job to instance-optimize the predictors.
In Cosmos, even with a large number of recurrent jobs, the parameters, inputs and execution plan can vary significantly over time. Therefore, it is important to not only capture repetitive patterns but also leverage the specific context of each of the stages in the workload.
\item Finally, the input file paths and the job names often preserve information of file type, or locations and can be used as text features. For instance, a log file with file names including ``\verb|log|'' usually consists of raw text in string format, which makes it more time-consuming to process compared to input with an ending of \verb|*.ss| (structured steam~\cite{triou2020analyzing}, a SCOPE internal file format).
\end{packed_enum}
}

\rev{In summary, we constructed three groups of features  as shown in Table~\ref{tab:features}.
As we will see in Section~\ref{sec:result_cost}, it is the combination of these input feature sets that yield the best prediction accuracy.}

\begin{table}[t]
	\small
	\caption{Cost model features}\label{tab:features}\vspace{-2ex}
	\begin{tabular}{p{2.1cm}p{3.5cm}p{1.7cm}}
		\toprule
		Feature Group & Feature Name     & Feature Description                          \\ \midrule
		Query Optimizer Features & \textit{Estimated Cost, Estimated Input Cardinality, Estimated Exclusive Cost, Estimated Cardinality} for the last operator of the stage & Numeric features from the optimizer's internal information\\
		Historic Statistics      & the \textit{Exclusive Time} and the \textit{Output Size}  for the job template and operator combination & The historic average of the statistics\\
		Normalized File Path/Job Name      & \verb|Norm| \verb|Job| \verb|Name|, \verb|Norm| \verb|Input| \verb|Name| & Text features                               \\ \bottomrule
	\end{tabular}
\vspace{-0.25cm}
\end{table}

\subsubsection{Model Implementation}

NimbusML~\cite{nimbusml} is an open-source python package for ML.NET~\cite{ahmed2019machine}. %
We tried different ML learners from NimbusML (e.g., linear regression, ensemble regression, etc.) and found that the LightGBM learner~\cite{ke2017lightgbm} is the best in terms of prediction accuracy for our use case. We developed two groups of models, each of them using only the non-textual features (i.e., query optimizer features and historic statistics): {\color{black}(1) \textbf{General model}, i.e., one model for all stages; and (2) \textbf{Stage-type specific model}. In fact, we observe that stages can be divided by their \emph{type} based on the operators involved. Similar to CLEO~\cite{siddiqui2020cost} where the model is sub-graph specific, the stage-type corresponds to a unique set of (usually one or two) operators forming the stage, e.g., an \verb*|Extract_Split| stage has a \verb*|Process| operator followed by a \verb*|Split| operator. %
In the production workload used in this paper, we observed 33 stage types. We therefore train stage-type specific models, each with more homogeneous data. The stage-type specific models capture the heterogeneity of runtime variation across different combinations of operators. Given that we only select recurrent jobs for the checkpoint mechanism, it is desirable for the cost model to ``overfitted'' the selected recurrent jobs.}

{\color{black}To leverage text features such as \verb|Norm| \verb|Input| \verb|Name| and the \verb|Norm| \verb|Job| \verb|Name| where simple One Hot Encoding~\cite{bishop2006pattern} is not possible, we trained a customized word embedding using a language model~\cite{bengio2003neural} and integrated it with another DNN model with 2 hidden layers to predict the final targets as a benchmark. }
We host the end-to-end model training process on Azure ML for better experiment tracking and model archiving~\cite{aml}.

\eat{
\subsection{DNN Model}

The \verb|Norm| \verb|Input| \verb|Name| and the \verb|Norm| \verb|Job| \verb|Name| are text features and might have millions of different possible values, therefore simple One Hot Encoding~\cite{bishop2006pattern} is not possible. A direct solution is to use a DNN as word embedding for featurization.
We develop a customized word embedding by pre-training a language model~\cite{bengio2003neural} based on the collection of corps (i.e., the \verb|Norm| \verb|Input| \verb|Name| and the \verb|Norm| \verb|Job| \verb|Name|) as opposed to using a typical word embedding such as GloVe~\cite{pennington2014glove} or FastText~\cite{bojanowski2017enriching} that are trained on Common Crawl or Wikipedia.

Figure~\ref{fig:relationship} illustrates the architecture of the full DNN model and the corresponding layers. 
\begin{figure}[t]
	\centering
	\includegraphics[width=\columnwidth]{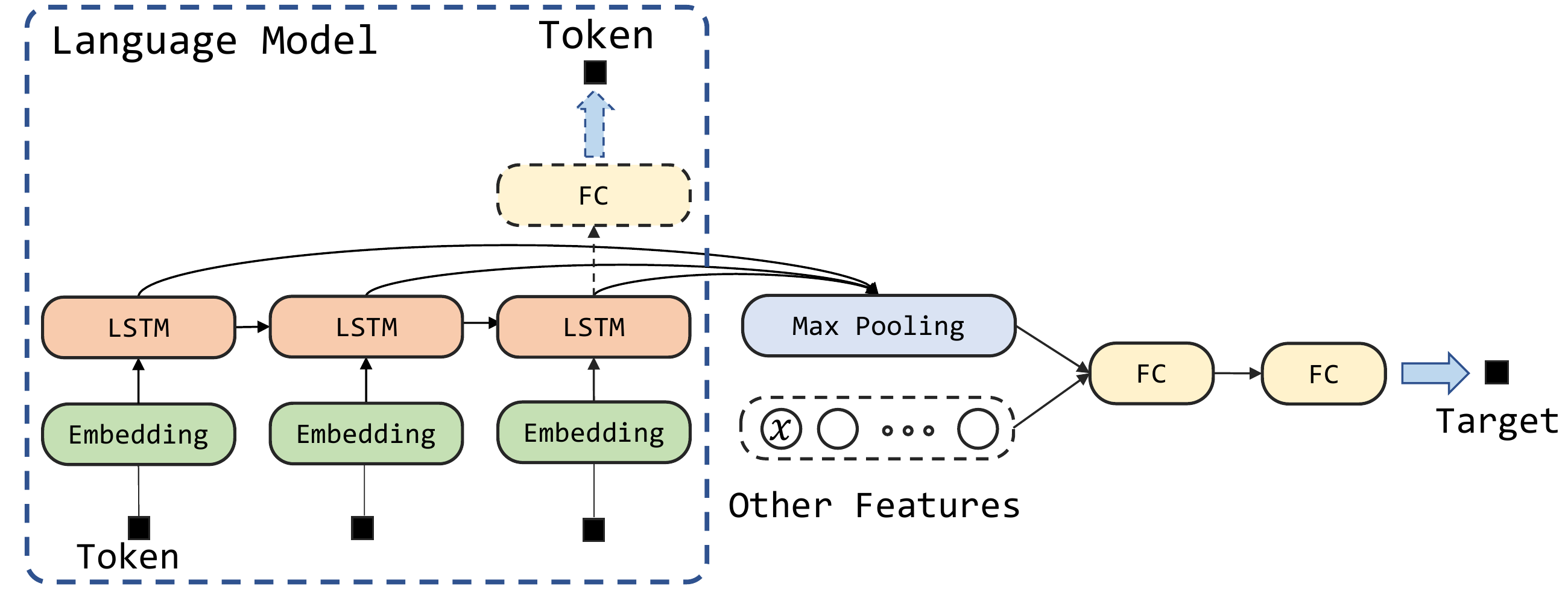}
	\vspace{-0.4cm}
	\caption{The neural network structure for the stage-level predictors.}\label{fig:relationship}
	\vspace{-0.5cm}
\end{figure}
The Language Model consists of (1) an embedding layer to convert each token's One Hot vector to embedded features, (2) a Long Short Term Memory (LSTM) layer~\cite{gers1999learning}, and (3) a fully-connected layer to predict the probability of each character, which is a common DNN model structure~\cite{bengio2003neural}. %
After pre-training the Language Model, we use the hidden cell outputs from the LSTM layer as inputs to the max pooling layer for the full model and concatenate them with the other numeric features. We use two fully-connected layers at the end to predict the final target, i.e., the execution time or the output size for each stage. 
}

\subsection{Time-to-Live Estimator}\label{sec:ttl}

The time-to-live (TTL) estimator predicts the average lifetime of the intermediate output of each stage, defined as the time interval from the average end time of \textit{all} tasks in the corresponding stage to the end time of the job. {\color{black}This is different from the estimation at the sub-tree/sub-query level because the TTL can be impacted by operators not included in the sub-tree which determine the job end time.} Instead of training a DNN model to capture the complex dependency structure between stages (as in~\cite{marcus2019plan,marcus2019neo,ortiz2018learning}), we introduce a simple schedule simulator to mimic the job execution process in Cosmos.
The TTL estimator consists of two steps. First, a job runtime simulator takes as input (1) the stage execution time (i.e., the average task latency) estimated by the stage execution time predictor from the previous section; 
and (2) the execution graph to simulate the job execution process. Second, we develop another machine learning model to further improve the TTL prediction based on the simulator output.
In the following sections, we discuss each step in more detail.  

\subsubsection{Job Runtime Simulator}

The job runtime simulator estimates the start and end time of each stage based on the predictions of the stage execution time and the dependency relationship in the execution graph. To simplify the modeling, we assume strict stage boundaries, i.e., each stage can only start after all of its upstream stages have finished.
A topological sorting~\cite{sort} algorithm sorts all stages in a linear order based on the execution graph\footnote{This is very similar to how the SCOPE job manager schedules tasks in Cosmos.}, such that an upstream stage executes before a downstream stage. The schedule simulator uses the linear ordering of the stages to estimate the stages' start and end times. For each stage, the simulator calculates its start time based on the maximum end time of all its upstream stages, and estimates its end time based on the estimated stage execution time from the stage execution time predictor.

Algorithm~\ref{algo:simulator} shows the detailed schedule simulator process. 
Based on the linear ordering from the topological sorting algorithm, we schedule stages sequentially from the front of the ordered stack (from position 0). 
\begin{algorithm}[t] %
	\small
	\DontPrintSemicolon
	\SetKwInOut{Input}{Input}
	\SetKwInOut{Output}{Output}
	\SetKwInOut{Initialize}{Initialize}
	\SetKwFunction{UpstreamStages}{UpstreamStages}
	\SetKwFunction{upstream}{upstream}
	\SetKwFunction{MaxUpstreamEndTime}{MaxUpstreamEndTime}
	\SetKwFunction{ScheduleSimulator}{ScheduleSimulator}
	\SetKwFunction{True}{True}
	\SetKwFunction{False}{False}
	
	\Input{execution graph $G$, ordered stack $R$, estimated execution time $T$}
	\Output{start time for stages $D[s]$ \\ end time for stages $P[s]$}
	\Initialize{$D[s]=$Null, $\forall s \in R$ \\ $P[s]=$Null, $\forall s \in R$}
	\BlankLine	
	\ForEach{stage $s \in R$} {
		\MaxUpstreamEndTime = 0\;
		\If{$s$.\UpstreamStages != Null} {
			\ForEach{$\upstream \in s.\UpstreamStages$} {
				\MaxUpstreamEndTime = $\max \big\{\MaxUpstreamEndTime, P[\upstream]\big\}$ 
			}
		}
		$D[s] = \MaxUpstreamEndTime$\;
		$P[s] = D[s] + T[s]$\;
	}	
	\Return $D,P$
	\caption{\texttt{ScheduleSimulator}} 
	\label{algo:simulator} 
\end{algorithm}	
The TTL can be calculated as the time interval between the stage end time and the job end time.

\subsubsection{Fine-tuning}

{\color{black}While the simulator assumes a strict stage boundary and captures the dependency between stages, it doesn't simulate the pipelined operation. In \cosmos, for some stage types, tasks can start before all the tasks of upstream stages finish. Strict stage boundary assumption is helpful for computation efficiency; however, it potentially results in overestimating the TTL. Therefore, we create an ML model to systematically adjust for this bias by stage-type. We observed that some of the stage types usually have longer or shorter TTL, such as \verb|Extract|, that always starts before all the other stages thus with longer TTL or an \verb*|Aggregate| stage that tends to be placed towards the end of the job thus has shorter TTL. Therefore, we develop machine learning models per stage-type to have different adjustment mechanisms to achieve better accuracy. }
The input feature for the stacking model includes the estimated TTL from the simulator as well as the time from start (TFS), which is defined as the time interval between the start time of the job to the start time of the corresponding stage. Those two values define the ``position'' of this stage throughout the execution of the job. %

\begin{figure}[t]
	\centering
	\vspace{-0.25ex}
		\includegraphics[width=0.94\columnwidth]{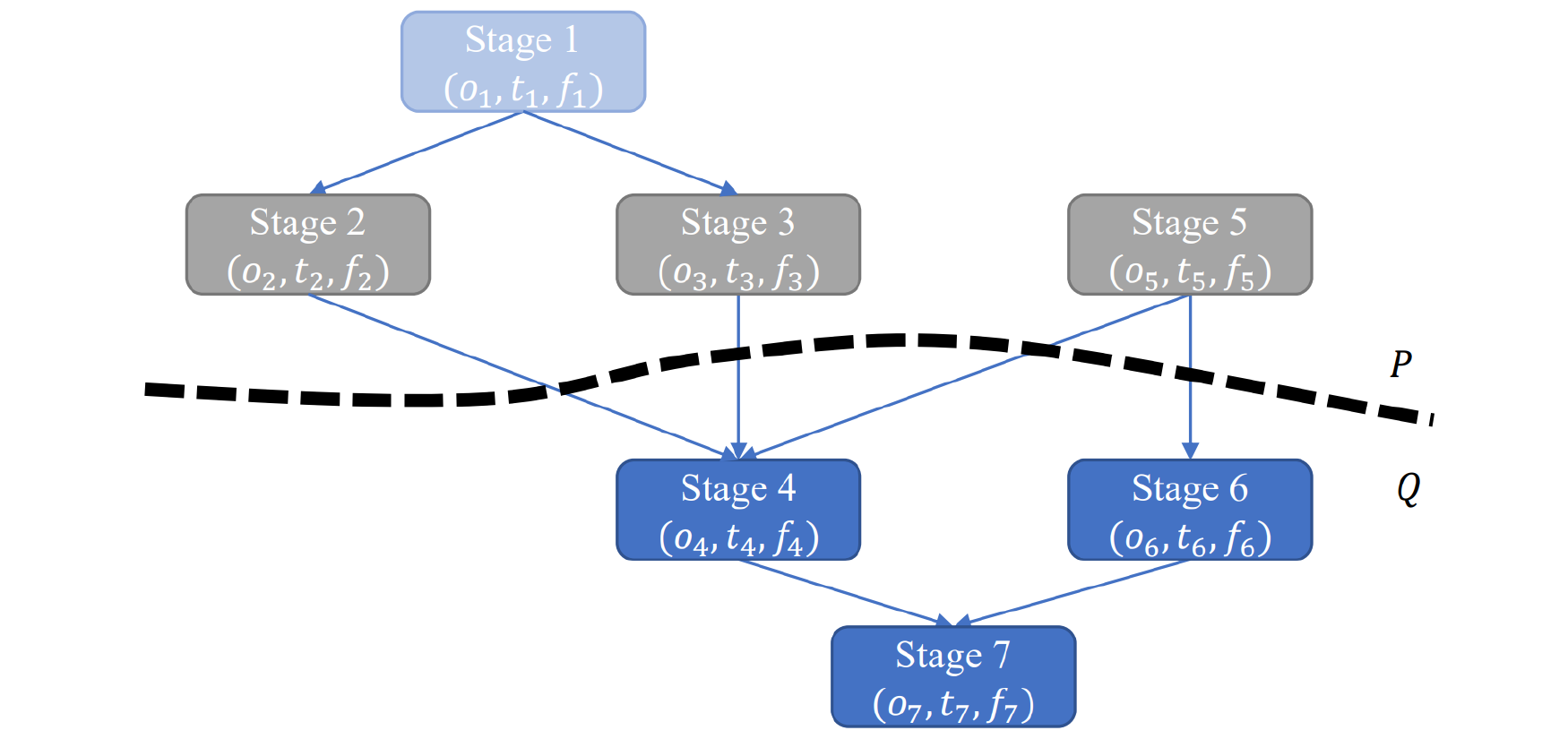}
		\vspace{-0.2cm}
		\caption{Graph cut in the integer programming.}\label{fig:ip}
		\vspace{-0.4cm}
\end{figure}

\section{Checkpoint Optimizer}\label{sec:opt}

We now describe the checkpoint optimizer. {\color{black}Similar to Flint~\cite{sharma2016flint}, we only consider a set of ``frontiers" of the program's lineage graph. 
The checkpointing problem can then be naturally mapped to finding a cut in the execution graph.
}
We categorize stages in a job execution graph into three groups ({\color{black}with respect to each cut):} 
\begin{enumerate}
\item Group I: the checkpoint stages, i.e., stages that need to persist their outputs to the global storage;
\item Group II: stages that have finished executing before the checkpoint stages; and 
\item Group III: stages that will execute after the checkpoint stages. 
\end{enumerate}

Phoebe's checkpointing optimizer is extensible: it can be tuned using different objective functions based on different checkpointing applications, as described earlier in Section~\ref{sec:intro}. 
In particular, we discuss two of the applications in this section, namely freeing up temp data storage on hotspots and quickly restarting failed jobs.
Below, we first show an IP formulation of the single-cut checkpoint problem. 
Then we show how it can be extended for multiple cuts. 
Finally, to solve the IP efficiently, we split our formulation into two sub-problems: (1)~a heuristic approach to {\color{black}obtain efficiently the optimal solution} without considering the global storage cost and with one cut; and (2)~enforcing the global storage constraints.

\subsection{Integer Programming}  \label{sec:ip} 

{\color{black}
Table~\ref{tab:notation} summarizes the notation.
\begin{table}[t]
	\small
	\caption{{\color{black}Notation}}\label{tab:notation}
	\vspace{-2ex}
		\begin{tabular}{cp{6.5cm}}
			\toprule
			  variable&description\\\midrule
			  $u$ & stage index \\
              $S$ & set of stages \\
            	 $E$ & set of edges\\      
               	 $o_u$ & output size of stage $u$ \\       
               	 $t_u$ & time-to-live of stage $u$ \\		                    	
          		$G$& auxiliary variable, the total global storage usage\\
           		$T$& auxiliary variable, the total saving for temp data storage\\ 
           		$\alpha$& cost factor of using global storage\\      		
           		\midrule
            	 $z_u$ & decision variable representing if stage $u$ is before the cut \\		                    
				$d_{uv}$ & decision variable representing if edge $(u,v)$ is on the cut  \\
				$g_u$& auxiliary variable representing if $u$ is a checkpoint stage\\   \midrule
			 $\mathcal{C}$& set of cuts\\ 
			 $c$& cut index\\   
			  $z_u^{(c)}$ & decision variable representing if $u$ is before the cut $c$\\		                    
			 $d_{uv}^{(c)}$ & decision variable telling if edge $(u,v)$ is on the cut $c$ \\\bottomrule
	\end{tabular}
	\vspace{-2.5ex}
\end{table}
}
Let us consider the case of adding one cut in the execution graph. For stage $u$, assume its output size $o_u (\geq 0)$ and time-to-live (TTL) $t_u (\geq 0)$
 are known (see Figure~\ref{fig:ip}). Let $z_u,~\forall u \in S$ be a set of binary decision variables representing if stage $u$ is executed before the graph cut, i.e., on the $P$-side of the cut (e.g., $z_u=1$ indicates that stage $u$ is before the cut). Let $d_{uv},~\forall (u,v) \in E$ be another set of decision variables representing if edge $(u,v)$ is on the cut, where $E$ denotes the set of edges in this execution plan. For instance, in Figure~\ref{fig:ip}, $d_{24}$,~$d_{34}$,~$d_{54}$ and $d_{56}=1$. The total number of decision variables is $|S| + |E|$, and the number of possible combinations is $2^{|S| + |E|}$ as they are all binary. 

We can now use the two sets of binary decision variables to decide the grouping of a given stage $u$:
\begin{itemize}
	\item Group I: $z_u=1$ and $d_{uv}=0,~ \forall (u,v) \in E$,
	\item Group II: $z_u = 1$ and $\exists d_{uv}=1,~ \forall (u,v) \in E$, and 
	\item Group III: $z_u = 0$.
\end{itemize}

The space needed for global storage is {\color{black}proportional to }the sum of the total output sizes for stages in Group II. 
The checkpoint optimization problem can be formulated as an IP as follows:

\vspace{-0.3cm}
{\small
\begin{align}
\max_{\substack{z_u, \forall u\in S\\ d_{uv},\forall (u,v)\in E}}&  \text{Obj}(z_u,d_{uv}) \label{eq:01}\\
\text{s.t.}~ &G=\sum_{u\in S}o_ug_u,\label{eq:02}\\
& g_u \geq d_{uv},~\forall (u,v)\in E,~\forall u,\label{eq:03}\\
& d_{uv}-z_u+z_v\geq0,~\forall (u,v)\in E,\label{eq:07}\\
& d_{uv} \in\{0,1\},~\forall (u,v)\in E,\label{eq:08}\\
& z_u\in\{0,1\},~\forall u \in S\label{eq:09}.
\qedhere
\end{align}
}
Equation ~\eqref{eq:01} can be replaced by different objective functions depending on the application. We will discuss two possible applications in the following sections. 
Constraint~\eqref{eq:02} calculates the total global storage needed from stages in Group II. It can be estimated by examining the output size on the upstream side of the edges on the cut. In Constraint~\eqref{eq:03}, $g_u=1$ for stages that have any edges on the cut ($d_{uv}=1$). Note that, this requires that the objective function minimizes $G$ and $g_u$. 
Constraint~\eqref{eq:07} ensures that if stage $u$ is on the side before the cut (the side of $P$) and $v$ is after the cut, $d_{uv}=1$.

\vspace{0.2cm}
\noindent\textbf{Multiple Cuts.} Let $z^{(c)}_u$ denote the binary decision variable indicating whether stage $u$ is before the cut $c \in \mathcal{C}=\{0,1,\cdots, K\}$, $K\geq 1$. Let $d_{uv}^{(c)}$ denote if edge $(u,v)$ is on the cut $c$. The optimization problem for multi-cuts can be formulated as follows:
{\small
\begin{align}
\max_{\substack{z_u^{(c)}, \forall u\in S,~c \in \mathcal{C}\\d_{uv}^{(c)}, \forall (u,v)\in E, ~c\in \mathcal{C}}}&  \text{Obj}\left(z_u^{(c)},d_{uv}^{(c)}\right) \label{eq:a1}\\
\text{s.t.}~ &G=\sum_{u\in S}o_ug_u,\label{eq:a2}\\
&g_u \geq d_{uv}^{(c)},~\forall (u,v)\in E,~\forall c \in \{1,\cdots, K\},~\forall u, \label{eq:a22}\\
& z_u^{(c-1)} \leq~z_u^{(c)},~\forall c \in \{1,\cdots, K\}\label{eq:a08}\\
&d_{uv}^{(c)}-z_u^{(c)}+z_v^{(c)}\geq0,~\forall (u,v)\in E,~\forall c \in \mathcal{C},\label{eq:a7}\\
& \sum_{c \in \mathcal{C}} d_{uv}^{(c)} \leq~1~\forall (u,v)\in E\label{eq:a9}\\
& d_{uv}^{(c)} \in\{0,1\},~\forall (u,v)\in E,~\forall c \in \mathcal{C},\\
& z_u^{(c)}\in\{0,1\},~\forall u \in S,~\forall c \in \mathcal{C}\label{eq:a12}.
\qedhere
\end{align}}
The total number of decision variables is $(K+1)\cdot(|S| + |E|)$, and the number of possible combinations is $2^{(K+1)\cdot(|S| + |E|)}$. 
Constraint~\eqref{eq:a22} is similar to~\eqref{eq:03} and requires that the objective function aims to minimize $G$.
Constraint~\eqref{eq:a08} ensures that the cut $c-1$ comes ``before'' the cut $c$, i.e., all the stages before the cut $c-1$ will also be before the cut $c$. 
Constraint~\eqref{eq:a7} is similar \rev{to} the single-cut formulation, and Constraint~\eqref{eq:a9} ensures that the cuts do not overlap.

\subsection{Freeing Temp Data Storage in Hotspots}

{\color{black}With checkpointing, temp data for stages before the cut can be cleared when all the checkpoint stages finish as opposed to the end of the job. And this time difference is equal to the shortest TTL among those stages (the last one to finish).} For saving temp data storage on hotspots, we maximize the saving calculated as the product of the shortest TTL and the sum of the total output sizes for stages in Group I and II, $\sum_{v\in S: z_v=1}o_v \min_{u: z_u=1}t_u$. Considering the cost factor of using global storage to be $\alpha$, the IP formulation is as follows:
{\small
\begin{align}
\max_{\substack{z_u, \forall u\in S\\ d_{uv},\forall (u,v)\in E}}&  T-\alpha G \label{eq:11}\\
\text{s.t.}~ &
T=\sum_{u\in S}o_uw_u,\label{eq:33}\\
&w_u\leq t + M(1-z_u),~\forall u \in S,\label{eq:44}\\
&w_u\leq Mz_u,~\forall u \in S,\label{eq:55}\\
&t\leq t_u + M(1-z_u),~\forall u \in S,\label{eq:66}\\
&\eqref{eq:02} - \eqref{eq:09}, \nonumber
\qedhere
\end{align}}
where, $T$ calculates the total saving for temp data storage and $M$ is a sufficiently large number. Given Constraints~\eqref{eq:44} and~\eqref{eq:55}, $w_u=t$ if $z_u=1$ else 0 in the maximization problem. Therefore, $w_u=0$ for stages after the cut with $z_u=0$.
Similarly, Constraint~\eqref{eq:66} calculates the minimum TTL for stages before the cut in this maximization problem. $t\leq t_u$ when $z_u=1$ and $t\leq M$ when $z_u=0$, which can be ignored. This set of constraints ensure that in the maximization problem, $t$ is equal to the minimum of $t_u$ $\forall u$ where $z_u=1$.

For the cases with multiple cuts, the formulation is as follows:
{\small
\begin{align}
\max_{\substack{z_u^{(c)}, \forall u\in S,~c \in \mathcal{C}}}&  T - \alpha G \label{eq:a11}\\
\text{s.t.}~ 
&T=\sum_{u\in S}o_u\sum_{c \in \mathcal{C}}w_u^{(c)},\label{eq:a3}\\
&\Delta z_u^{(0)} = z_u^{(0)},\label{eq:at11}\\
&\Delta z_u^{(c)} = z_u^{(c)} -  z_u^{(c-1)},~\forall c \in \{1,\cdots, K\},\label{eq:at22}\\
&w_u^{(c)}\leq t^{(c)} + M(1-\Delta z_u^{(c)}),~\forall u \in S, \forall c \in \mathcal{C},\label{eq:a44}\\
&w_u^{(c)}\leq M\Delta z_u^{(c)},~\forall u \in S,~\forall c \in \mathcal{C},\label{eq:a55}\\
&t^{(c)}\leq t_u + M(1-z_u^{(c)}),~\forall u \in S,~\forall c \in \mathcal{C},\label{eq:a66} \\
&\eqref{eq:a2} - \eqref{eq:a12}. \nonumber
\qedhere
\end{align}}
Constraints~\eqref{eq:at11} and~\eqref{eq:at22} introduce $\Delta z_u^{(0)}$, indicating if stage $u$ is before the cut $0$, and $\Delta z_u^{(c)}~\forall c \in \{1,\cdots, K\}$ indicating if stage $u$ is between the cuts $c-1$ and $c$. 
Given Constraints~\eqref{eq:a44} and~\eqref{eq:a55}, $w_u^{(c)}=t^{(c)}$ for stages with $\Delta z_u^{(c)}=1$, else $0$ in this maximization problem.
$t^{(c)}$ in Constraint~\eqref{eq:a66} calculates the minimum TTL for stages before the cut $c$. $t^{(c)} \leq t_u$ when $z_u^{(c)}=1$, and $t^{(c)} \leq M$ when  $z_u^{(c)}=0$. In the maximization problem, $t^{(c)}$ is equal to the minimum of $t_u~\forall u$ where $z_u^{(c)}=1$.
Combining Constraints~\eqref{eq:a44},~\eqref{eq:a55} and~\eqref{eq:a66}, for each stage, we calculate the corresponding temp data saving based on the minimum TTL for stages in the same group who are between the same pair of cuts.
Since all constraints in the formulation are linear, the IP can be solved by existing solvers.

Solving the above IP over large execution graphs can take long, and given that we need to solve the IP for every job, it can quickly become operationally expensive. Therefore, we propose a heuristic-based solution to solve the IP formulation from the previous section at interactive speed, i.e., it can be run during job execution. The key idea is to maximize the objective and apply the global storage constraint separately, i.e., ignore the cost of using global storage when determining the cuts and consider adding only one cut. As a result, the IP formulation reduces to have only one set of decision variables, $z_u~\forall u \in S$.
{\small
\begin{align}
(\texttt{OptCheck1})\quad&\max_{\substack{z_u, \forall u\in S}}  T \label{eq:1s}\\
\text{s.t.}~ 
&\eqref{eq:09},~\eqref{eq:33} - \eqref{eq:66}. \nonumber
\qedhere
\end{align}
}
The above reduction has the following interesting property.

\begin{proposition}
	For any model primitives $t_u$ and $o_u$, there exists optimal solutions $z^*$ of problem \verb|OptCheck1| such that there exists $v\in S$ such that $z^*_u=1, \forall u\in S: t_u\ge t_v$ and  $z^*_u=0, \forall u\in S: t_u<t_v$. 
\end{proposition}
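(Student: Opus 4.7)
The plan is to first simplify the objective of \verb|OptCheck1| by folding in the auxiliary variables $w_u$ and $t$, then exhibit a threshold-form solution that dominates any given optimum. Concretely, I observe that in a maximization problem, Constraints \eqref{eq:44}--\eqref{eq:55} force $w_u = t$ when $z_u=1$ and $w_u=0$ when $z_u=0$, while Constraint \eqref{eq:66} forces $t = \min_{u: z_u=1} t_u$ at optimality (taking $t = 0$ when no stage is selected, which is an uninteresting case since $o_u, t_u \ge 0$). Substituting back into \eqref{eq:33}, the objective reduces to
$$
T(z) \;=\; \Bigl(\min_{u:\, z_u = 1} t_u\Bigr)\sum_{u:\, z_u = 1} o_u .
$$
Critically, \verb|OptCheck1| carries no edge or graph-cut constraints -- only the binary constraints \eqref{eq:09} remain after the reduction -- so any $z \in \{0,1\}^{|S|}$ is feasible, and in particular any threshold-form vector will be.

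Next I would take an arbitrary optimal solution $z^*$ and construct a threshold solution. Let $v \in \arg\min_{u:\, z^*_u = 1} t_u$ (handling the degenerate case where $z^*\equiv 0$ separately by noting that any $\tilde z$ is then also optimal with value $0$). Define $\tilde z$ by $\tilde z_u = 1$ iff $t_u \ge t_v$. Two observations complete the proof: (i) $\{u: z^*_u = 1\} \subseteq \{u: \tilde z_u = 1\}$, since every stage selected by $z^*$ has TTL at least $t_v$ by choice of $v$; and (ii) the minimum TTL among selected stages is the same under both solutions, namely $t_v$, because enlarging the selected set only adds stages with TTL $\ge t_v$. Therefore
$$
T(\tilde z) \;=\; t_v \sum_{u:\, t_u \ge t_v} o_u \;\ge\; t_v \sum_{u:\, z^*_u = 1} o_u \;=\; T(z^*),
$$
so $\tilde z$ is also optimal and has the required threshold form with witness $v$.

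The main subtlety -- not really an obstacle, but the step that must be done carefully -- is the correct handling of the $\min$-term when enlarging the selected set. One must verify that adding stages with $t_u \ge t_v$ does not \emph{increase} the minimum above $t_v$ (because $v$ itself is retained) and does not decrease it below $t_v$ (because no added stage has TTL less than $t_v$). Once this invariance is pinned down, monotonicity of the sum $\sum o_u$ under inclusion (using $o_u \ge 0$) delivers the inequality. A secondary cosmetic point is the tie-breaking: ties in $t_u$ are handled automatically by the weak inequality $t_u \ge t_v$ in the definition of $\tilde z$, so the threshold solution is always well-defined.
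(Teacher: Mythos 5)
Your proof is correct and rests on the same exchange argument as the paper's: enlarging the pre-cut set with stages whose TTL is at least the current minimum leaves $\min_{u:z_u=1}t_u$ unchanged while only increasing $\sum_{u:z_u=1}o_u$. The only difference is packaging---you construct the full threshold solution in one step from an arbitrary optimum, whereas the paper flips a single misordered variable and argues by contradiction (which strictly needs to be iterated to reach the threshold form), so your version is if anything slightly more complete.
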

\begin{proof}
	We prove by contradiction, suppose there exists an optimal solution $z'$ such that there exists $u,u'\in S$, such that $z'_{u}=0, z'_{u'}=1$ and $t_{u}>t_{u'}$. Constructing a new solution $z^*$ as follows:{\small
	\begin{align}
	z^*_u &= 1,\\
	z^*_{v} &= z'_{v}, \forall v\neq u.\\
	\sum_{v\in S: z^*_v=1}o_v \min_{u: z^*_u=1}t_u = \sum_{v\in S: z^*_v=1}&o_v \min_{u: z'_u=1}t_u \ge\sum_{v\in S: z'_v=1}o_v \min_{u: z'_u=1}t_u.
	\end{align}}	
	This indicates that for any optimal solutions $z'$ that does not satisfy the conditions stated in the proposition, we can always find another one $z^*$ with at least the same objective value. %
\end{proof}

Therefore, \textbf{we can enumerate the different combinations for $z_u$ for all the stages in an efficient way}. Specifically, we start from the first stage (ordered by TTL) and set its $z_u$ to 1 and the others to 0. Then, we incrementally add more stages to the side before the cut, set their $z_u$ to 1, and evaluate the corresponding objective value for each new solution. 
This is equivalent to finding the optimal timestamp before which all stages have $z_u=1$ and the total number of combinations is $|S|$. 
We consider the global storage cost separately and discuss that in Section~\ref{sec:capa}. 

This enumeration process is illustrated in Figure~\ref{fig:cumu}. Considering the execution process of a query job, after each stage finishes, the corresponding output will be saved to the temp data storage. Therefore, the total size of the temp data storage being used is increasing monotonically in time with more stages finishing. On the other hand, assuming that at the end time of stage $s$, we decide to clear the temp data storage for all the previous stages, the resulting saving for the temp data storage is equal to the TTL for this stage multiplied by the current temp storage usage size. The global storage space needed for Group III stages can be inferred based on the current execution status for each stage. 

\begin{figure}[t]
	\centering
	\includegraphics[width=0.98\columnwidth]{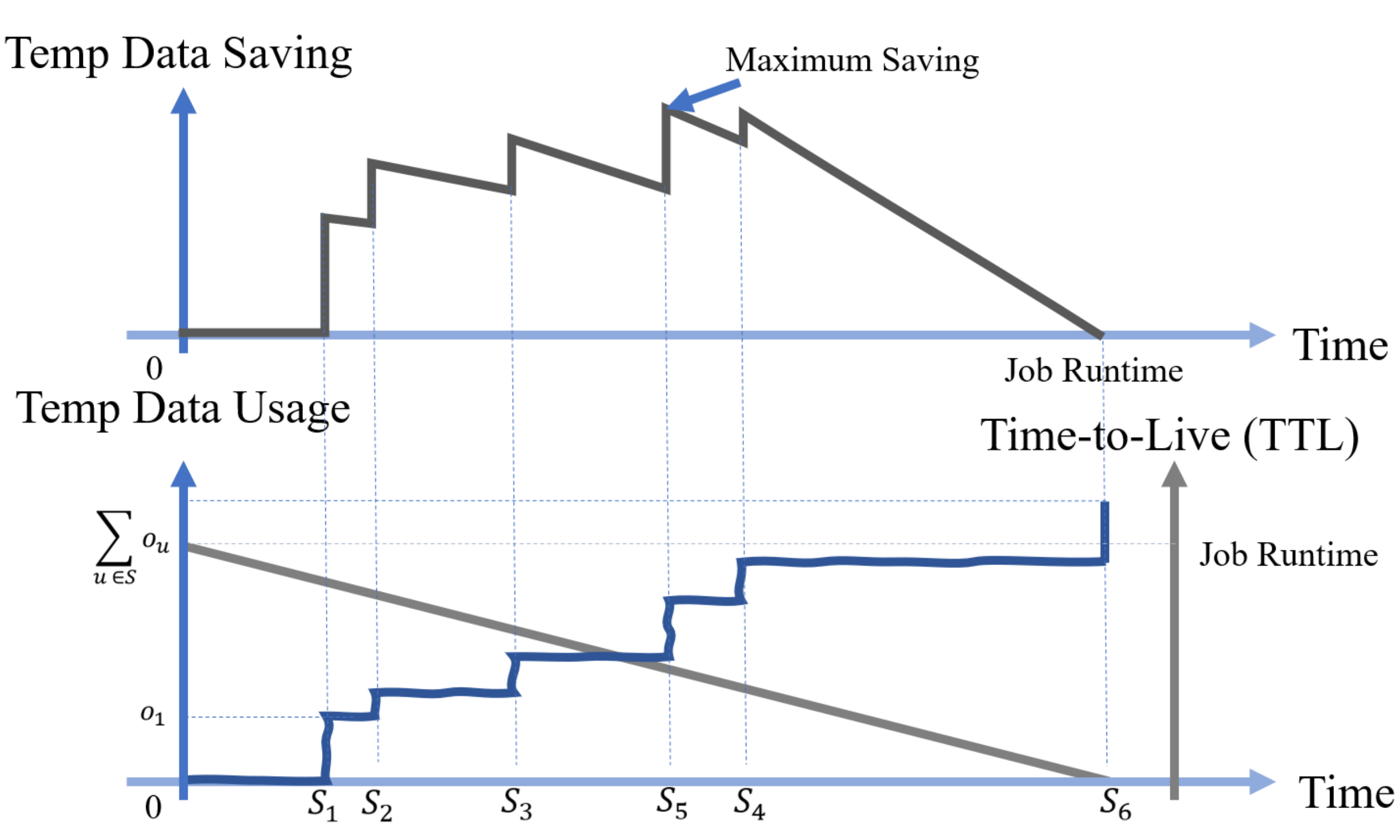}
	\vspace{-2.0ex}
	\caption{Potential temp data saving as a function of time.}\label{fig:cumu}
	\vspace{-2.0ex}
\end{figure}

\subsection{Restarting Failed Jobs}\label{sec:ext}
{\color{black}Some approaches~\cite{sharma2016flint,salama2015cost,daly2006higher} assume a distribution of the inter-arrival time for failures/interrupts given the mean time between failures/interrupts (MTBF/MTBI) as parameters. Others~\cite{chen2013selective, upadhyaya2011latency,yang2010osprey} assume a fixed probability of failure for an operator running on a worker node. %
In Cosmos, the execution time of tasks (30-40 seconds on average) has much less variation compared to the total job runtime (ranging from seconds to days). Therefore we find both the two assumptions above to hold. 

Let $\delta$ denote the average failure rate for a task, it can be approximated by the average task runtime and MTBF, given that the average runtime is much shorter than MTBF:
{\small
\begin{align}
	\delta\approx~\mathbb{E}(\text{Task Runtime})/MTBF.
\end{align}}
For Cosmos, $\delta\ll0.05$, meaning that MTBF is in the order of hours. The probability of failure for stage $u$ with $v_u$ tasks is equal to the probability of having no tasks to fail, which follows a binomial distribution:
{\small
\begin{align}
1-(1-\delta)^{v_u}\approx \delta v_u.
\end{align}}
}
Let $\bar{t_u}$ denote the time from start (TFS) for stage $u$, indicating the time interval from the job start time to the stage start time. The IP formulation for maximizing the recovery time saving with a single cut and not consider the global storage is as follows:
{\small
\begin{align}
(\texttt{OptCheck2})&\quad\max_{\substack{z_u, \forall u\in S}}  P_F\bar{T} \label{eq:f1s}\\
\text{s.t.}~ \bar{T}&\leq \bar{t_u}(1-z_u) + Mz_u\label{eq:3fs},\\
P_F&=\prod_{u \in S}^{}(1-\delta v_u)^{z_u}\left(1 - \prod_{u' \in S}^{}(1-\delta v_{u'})^{1-z_{u'}}\right), \label{eq:resiliency}\\
z_u&\in\{0,1\},~\forall u \in S,
\qedhere
\end{align}}
where Constraint~\eqref{eq:3fs} calculates the minimum TFS for all stages in Group III, i.e., $z_u=1$. Constraint~\eqref{eq:resiliency} calculates the probability of failure in one of the stages on the side after the cut but not before.
One can prove the optimal solution of the problem \verb|OptCheck2| has the similar property as in \verb|OptCheck1|. Therefore, the same heuristic solution can be applied to incrementally add stages to the side before the cut and search for the optimal checkpoint time. 
To maximize the expected time saving for the checkpoint mechanism considering the probabilities of stage failures, the trade-off is between a higher failure probability and a larger saving for the restarting time. 
A figure similar to Figure~\ref{fig:cumu} can be generated to show the changes of the probability of failing after a stage and the corresponding recovery time saving estimated based on the time from start (TFS) for the corresponding checkpoint stages, as a function of time.

\subsection{Capacity Constraints on Global Storage}\label{sec:capa}
\rev{The global storage is a separate storage system where the data is 3x replicated and, for operational reasons, it is cleaned regularly (e.g., every 7 days). At SCOPE scale, this translates in a capacity in the order of a few PBs per day.}
In the previous sections we did not consider the cost of using the global storage. However, we can incorporate global storage constraints when selecting the \textit{jobs} for checkpointing. Given the hundreds of thousands of jobs running every day, we can only checkpoint a fraction of them due to the checkpoint costs and overheads involved. 
Therefore, based on the objective value and the space needed for global storage for each job, we can be more selective about which job to checkpoint and achieve a high cost-benefit ratio. {\color{black}For instance, as shown in Figure~\ref{fig:failure}, long-running jobs are more likely to fail due to the large number of tasks. Thus checkpointing can be enabled only for long-running jobs.}

Consider a time period $T$ (e.g., 1 day) and let the checkpoint budget for global storage be $W$. Let $w_i$ denote the global storage space needed and $\pi_i$ denote the ratio between the objective value and the global storage needed for job $i$. The problem of applying global storage constraint can be seen as an online stochastic knapsack problem~\cite{marchetti1995stochastic}. 
The problem is challenging and it has been proved that there is no online algorithm that achieves any non-trivial competitive ratio~\cite{marchetti1995stochastic}. 
In Phoebe, we propose a simpler threshold-based algorithm that takes into account the arrival rate of jobs and the distributions for the weights and value-to-weight ratio based on the model estimation. The intuition behind this approach is that we want to select the items that are most ``cost-effective'' and, ideally, an optimal policy will select the items with high values of $\pi$. With higher job arrival rates, the probability of selecting each job is further reduced. 
Thus, if an item with estimated weight $w_i$ arrives with estimated value-to-weight ratio of $\pi_i$ at time $t$, when the remaining resource capacity $n(t)$ is larger or equal to its weight $w_i$ and its value-to-weight ratio is larger than the predefined threshold, $\pi^*$, the item is accepted, otherwise the item is rejected. That is, 
{\small
\begin{align}
D\big(w_i,\pi_i,n(t)\big)=
\begin{cases}
1~~\text{if}~~\pi_i\geq\pi^*~~\text{and}~~n(t)\geq w_i,\\ 0~~\text{otherwise,} 
\end{cases}
\end{align}}
where, $D\big(w_i,\pi_i,n(t)\big)$ is the binary decision variable to accept/reject an item $i$.
This policy ensures that in the set of accepted items, their average estimated value-to-weight \rev{ratios are} larger than $\pi^*$. 
Assuming that the two distributions of weight and value are independent, we can define:\vspace{-2ex}
{\small
\begin{align}
\pi^*&=\Phi_\pi(1-p), \text{and}\\
p &= \frac{W}{\lambda T \mathbb{E}_w(w)}, \label{eq:p}
\end{align}}
where $\Phi_\pi(\pi)$ is the cumulative distribution function for $\pi$, and $p$ denotes the ratio between the total resource capacity, i.e., the total global available storage space, and the expected total weights for all the arriving items. The denominator calculates the expected total weights by taking the product of the arrival rate $\lambda$, $T$, according to Little's Law~\cite{little1961proof}, and the expected weight for the items, $\mathbb{E}_w(w)$, assuming that the weight distributions for all items are i.i.d. Without capacity constraints, one can show that this threshold of $\pi^*$ results in the expected total weights for the selected items to be equal to the total capacity.

\subsection{Discussion}\label{sec:discussion}

We now discuss the trade-offs with our checkpoint optimizer.
The heuristic method for selecting the checkpoint stages is simple and fast.
It does not require solvers for the optimization formulation, however it involves sorting stages as a pre-processing step and uses brute-force search for identifying the optimal set of checkpoints. Since we enumerate over all feasible solutions, considering the global storage constraint at the same time would be expensive as for each possible cut, the global data size needs to be computed by examining the full execution graph. Therefore, we incorporate the global storage constraint in a separate step, which dramatically reduces the computation complexity.
The heuristic method, however, is not as flexible as the holistic integer programming solution in terms of adding additional constraints or considering multiple cuts.
However, practically, it is more desirable to create single checkpoints in more jobs than multiple checkpoints in a given job.

{\color{black}The same formulation as in \verb|OptCheck| and \verb|OptCheck2| can be generalized to optimize the checkpoint decisions jointly across multiple jobs, e.g. jobs in the queue. However, the computation complexity increases significantly. In Phoebe, we consider a two-step approach to (1) choose the most profitable jobs for the checkpoint using an online stochastic knapsack framework as  shown in Section~\ref{sec:capa}; and (2) pick the optimal checkpoints for the selected jobs.}

Finally, \rev{Phoebe's} architecture is flexible: modules can be replaced or removed depending on the performance requirement.
For instance, the ML predictor for TTL can be removed as the optimizer can use the estimated TTL from the job runtime simulator directly. In the extreme case, the optimizer can use some basic ``prediction'' for the stage costs, e.g., assuming a constant output size and runtime for each stage and simply make decisions based on the query execution graph and the outputs of the job runtime simulator (results can be seen in Section~\ref{sec:result}). In this sense, the cuts are mostly selected based on the count of stages. %

\section{Experiments}\label{sec:result}

In this section, we evaluate Phoebe (1) using back-testing over production Cosmos workloads; and (2) executing a smaller set of jobs for impact on performance.
We first evaluate the accuracy of stage-wise cost models (output size and execution time), and then evaluate the effectiveness of checkpoint optimization for two applications, namely, freeing up local storage on hotspots and quickly restarting failed jobs. We also provide anecdotal evidence of how Phoebe can help other checkpoint applications.
For all experiments, we used the production workload for \scope jobs over different days (with hundreds of thousands of jobs per day).

\subsection{Stage-Level Predictors}\label{sec:result_cost}
We evaluated the prediction accuracy for both the LightGBM and DNN models as discussed in Section~\ref{sec:module1}.

\begin{figure}[t]
	\centering
	\begin{minipage}[b]{0.16\textwidth}
		\includegraphics[width=\textwidth]{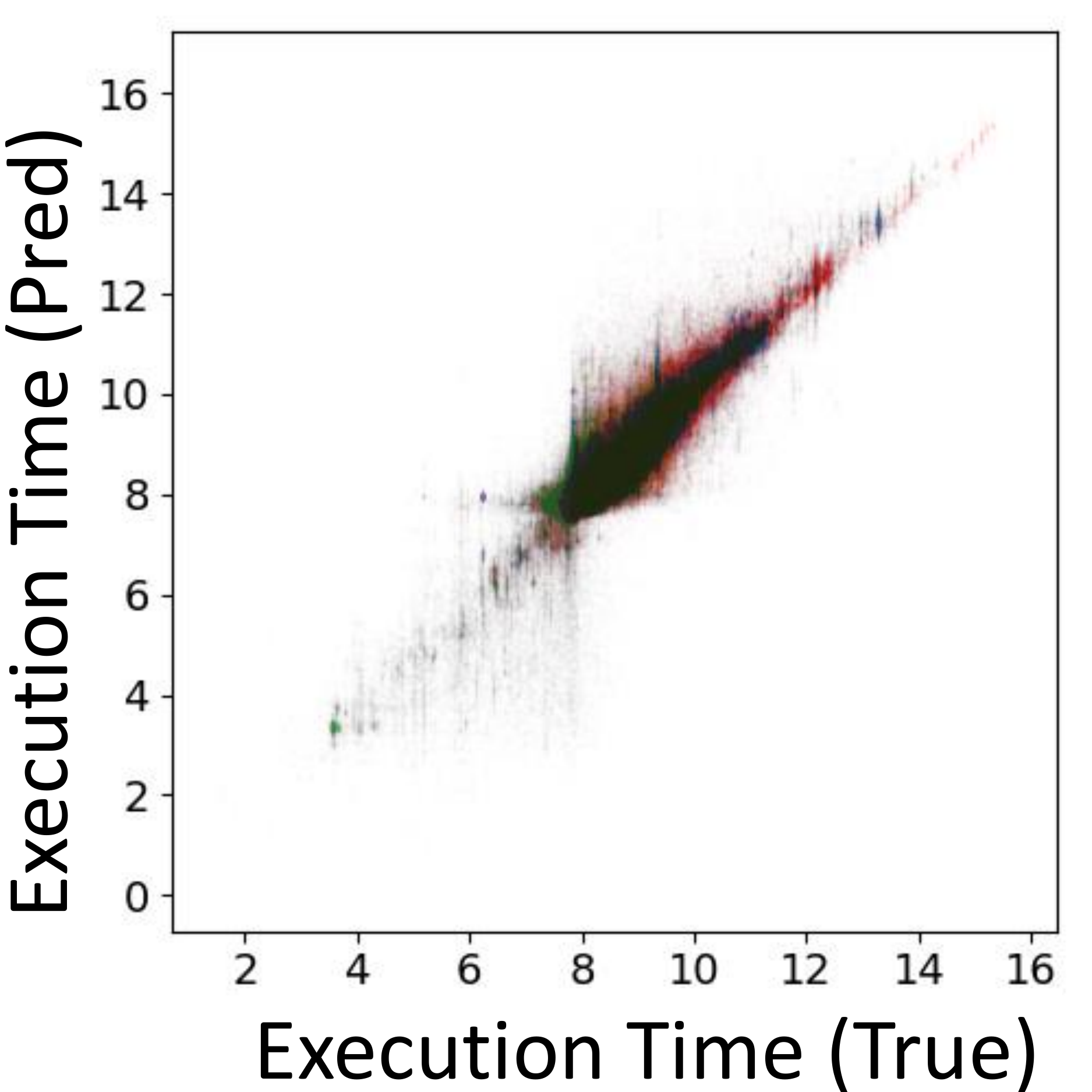}
	\end{minipage}%
	\begin{minipage}[b]{0.16\textwidth}
		\includegraphics[width=\textwidth]{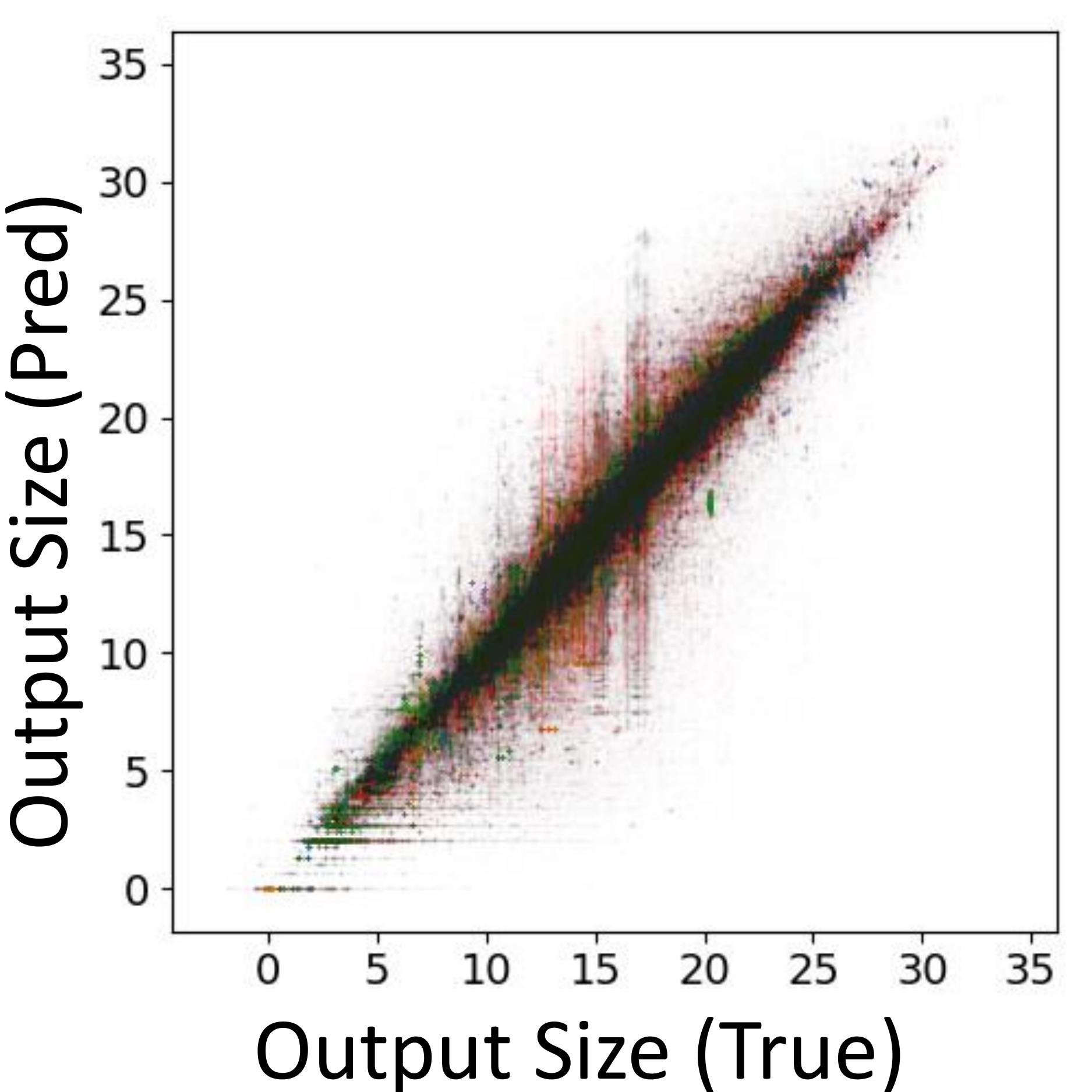}
	\end{minipage}%
	\begin{minipage}[b]{0.16\textwidth}
		\includegraphics[width=\textwidth]{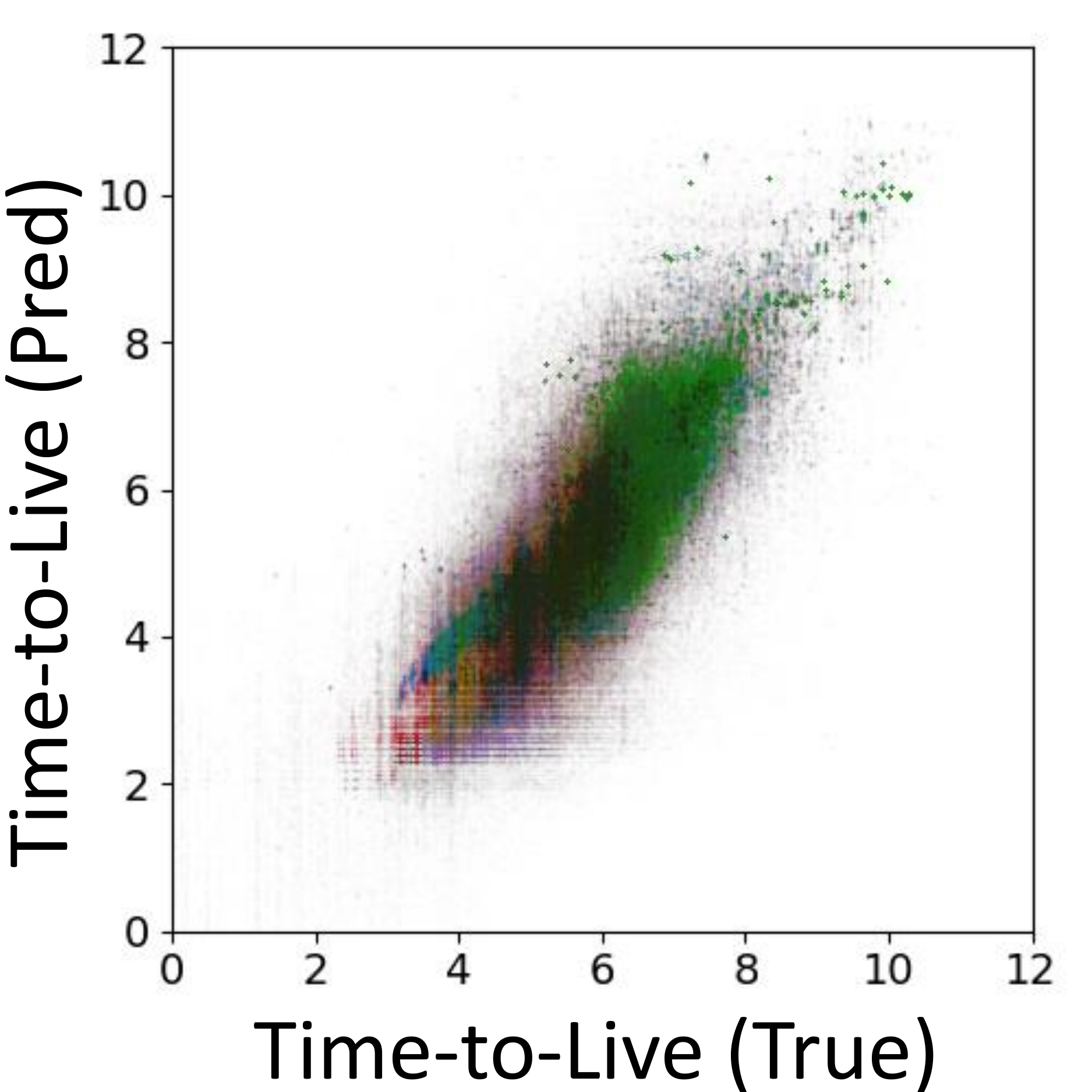}
	\end{minipage}%
	\vspace{-0.15cm}
	\caption{Accuracy for LightGBM models for predicting stage execution time (left), stage output size (middle), and time-to-live (right).}\label{fig:nimbus}
	\vspace{-0.1cm}
\end{figure}

\subsubsection*{LightGBM Models}
We developed the general and stage-type specific LightGBM models {\color{black}with 5-days data (with 13.0 million samples) and testing on 1-day data (with 2.9 million samples)}. 
Figure~\ref{fig:nimbus} shows the prediction accuracy for the stage-type specific models color-coded by the stage types. 
Since there are 33 stage-types, we have 33 models each for predicting the execution time and the output size. The $R^2$ values for the LightGBM models are 0.85 and 0.91 for the execution time and output size predictions respectively. 

The $R^2$ for predicting the time-to-live (TTL) is 0.35 (see right of Figure~\ref{fig:nimbus}), which is not as good due to slightly over-estimation. Future work can focus on improving the accuracy by incorporating more specific rules in the simulator or applying more fine-grained stacking models, such as dependency-type specific models. However, the correlation between the prediction and the true value is relatively high (0.77). Note that, in the optimization module, the absolute values for TTL are not as important as the relative scale. A model that can capture the correct order of the TTL is anyways helpful in improving the checkpoint objective (see Section~\ref{sec:saving}).

Based on Permutation Feature Importance (PFI)~\cite{breiman2001random}, the top 5 important features for one of the trained cost models are: \textit{Estimated Exclusive Cost} (0.75), \textit{Estimated Cardinality} (0.13), Historic \verb|MergeJoin| Latency (0.10), \textit{Estimated Input Cardinality} (0.06), and Historic \verb|Reduce| Latency (0.06), measured by the reduction of $R^2$ if shuffling the corresponding entry of the features. We can see that \textbf{it is the mixture of estimated cardinality from the query optimizer and the historic information that jointly improves the prediction accuracy.}

To better understand the implication of the models, we ran additional experiments to use the {\it perfect cardinality estimation} as inputs. The $R^2$ metric is improved only by 0.04-0.05, which indicates the effectiveness of our approach that automatically adjusts for the biases in the inputs (i.e., cardinality estimates). \rev{If we use stage-type as features, the accuracy is not as good: for the prediction of the output sizes, the $R^2$ is reduced from 0.91 to 0.84; for the execution time instead the $R^2$ is reduced from 0.85 to 0.72.}

Another important measurement of performance is the generalization of the trained model to future days, which determines the frequency needed to retrain the ML models. In production, we need to determine the frequency for retraining and deploying new models to keep up with the changes in data distribution. In Figure~\ref{fig:decay}, we measure the performance of the trained model on testing data that is further away from the time frame where the training data is extracted from (e.g., 1 day after, 2 days after, etc.). We can see that the accuracy reduces gradually as the testing dates move over time. 

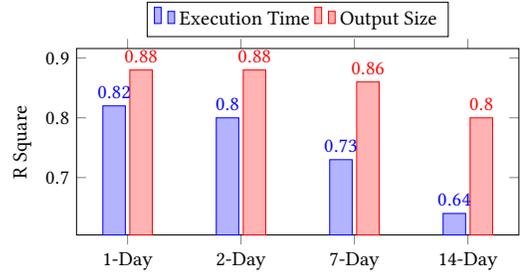
\begin{figure}[t]
	\centering
\scalebox{0.85}{
\begin{tikzpicture}  
 
\begin{axis}  
[  
ybar,
width=8.5cm,
height=4.5cm, %
enlargelimits=0.15,%
legend style={at={(0.5,1.25)}, %
	anchor=north,legend columns=-1},     
ylabel={R Square}, %
symbolic x coords={1-Day, 2-Day, 7-Day, 14-Day},  
xtick=data,  
nodes near coords,  
nodes near coords align={vertical},  
]  
\addplot coordinates {(1-Day,0.82) (2-Day, 0.80) (7-Day, 0.73) (14-Day, 0.64)}; %
\addplot coordinates {(1-Day, 0.88) (2-Day, 0.88) (7-Day, 0.86) (14-Day, 0.80)};  
\legend{Execution Time, Output Size}  

\end{axis}   
\end{tikzpicture}}  
	\vspace{-0.33cm}
 	\caption{Performance accuracy with testing days further away from the training period.}\label{fig:decay}
	\vspace{-0.1cm}
\end{figure}

\subsubsection*{DNN Models}
We developed the general DNN models for both the execution time and output size using the same data. 
The $R^2$ values for the DNN models are 0.84 and 0.89 for the execution time, and output size prediction, respectively.  

{\color{black}\subsubsection*{Benchmark of Existing Models}

\begin{figure}[t]
	\centering
	\includegraphics[width=\columnwidth]{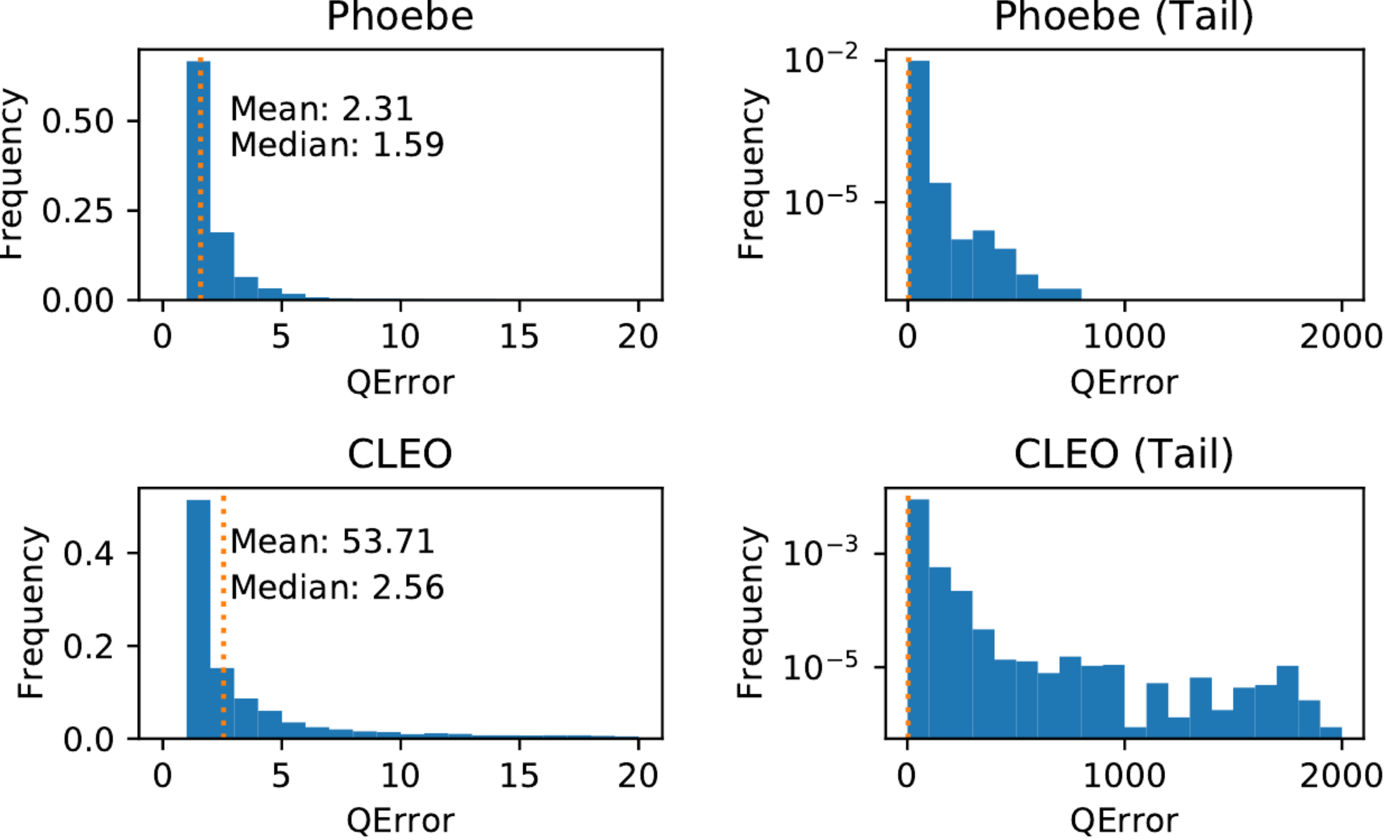}
	\vspace{-4.5ex}
	\caption{{\color{black}Accuracy for end-to-end job run-time prediction.}}\label{fig:TR}
	\vspace{-0.2cm}
\end{figure}
  
Figure~\ref{fig:TR} shows the distribution of prediction accuracy for the end-to-end job execution time from Phoebe (top) compared with CLEO~\cite{siddiqui2020cost} (bottom), measured by QError~\cite{moerkotte2009preventing} in origin scale (e.g., in seconds). The QError is a commonly used metric for measuring the accuracy of cardinality estimates~\cite{negi2020cost}. It is defined by:
	$\text{QError}(y,\hat{y}) = \max \{y/\hat{y}, \hat{y}/y\}$ where $y$ and $\hat{y}$ denote the actual and predicted values respectively.
The QError of CLEO has a long tail (see right of Figure~\ref{fig:TR}), meaning that for a small portion of jobs, the estimation is not good, and those are usually long-running jobs (>66\% longer on average than all the jobs). This is consistent with our observation that the cost models have lower accuracy for large complex query plans. 
} %

\subsubsection*{Discussion}
\textbf{The training time for DNN models is much longer than the LightGBM models.} Each (general) DNN model takes over 40 hours with a standard virtual machine with 6 Cores, 56 GB RAM, 
and \rev{an} NVIDIA Tesla K80 GPU. This is partially due to the large size of the training data as well as the introduction of the LSTM layer for the featurization of the text columns.
One potential improvement is to replace it with attention layers such as transformers to allow better parallelization with GPU.
Compared with the general DNN models, the stage-type specific model based on LightGBM is slightly more accurate. And the training time is much shorter, in the order of minutes. Therefore, in the following sections, we use the prediction results from LightGBM as the input to the optimizer.

\subsection{Freeing Temp Data Storage in Hotspots}\label{sec:saving}

We now evaluate the checkpoint algorithm for freeing up temp data storage on local SSDs in hotspots using back-testing, i.e., to see how well the algorithm would have done ex-post. 

\subsubsection*{Integer Programming}
We implemented the integer programming (formulation for freeing up temp data storage subject to different numbers of cuts as well as the cost for global data storage). 
We used Google OR-Tools~\cite{ortools} with CBC solver~\cite{forrest2018jpfasano} on Azure ML~\cite{aml}. 
Indeed, the IP solution is more general since it considers multiple cuts and different costs for the global storage. 
However, as illustrated in Figure~\ref{fig:time}, the processing time of using solvers is two orders of magnitude longer than the heuristic solution that looks for the optimal checkpoint time. The IP solutions provide interesting insights:

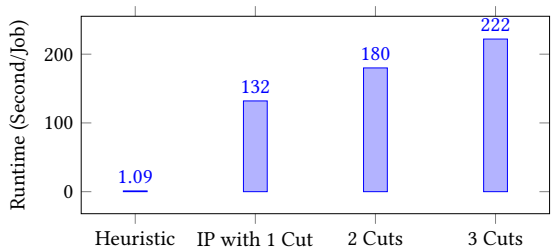
\begin{figure}[t]
	\centering
	\scalebox{0.9}{
	\begin{tikzpicture}  
	
	\begin{axis}  
	[  
	ybar,
	width=8.5cm,
	height=4.5cm, %
	enlargelimits=0.15,%
	legend style={at={(0.5,1.25)}, %
		anchor=north,legend columns=-1},     
	ylabel={Runtime (Second/Job)}, %
	symbolic x coords={Heuristic, IP with 1 Cut, 2 Cuts, 3 Cuts},  
	xtick=data,  
	nodes near coords,  
	nodes near coords align={vertical},  
	]  
	\addplot coordinates {(Heuristic,1.09) (IP with 1 Cut, 132) (2 Cuts, 180) (3 Cuts, 222)}; %
	\end{axis}   
	\end{tikzpicture}}  
	\vspace{-0.2cm}
	\caption{Runtime comparison of the heuristic solution with IP solution having different \rev{numbers} of cuts.}\label{fig:time}
	\vspace{-0.3cm}
\end{figure}

\begin{itemize}
	\item \textbf{Adding more cuts is not cost-effective.} Figure~\ref{fig:pareto} shows the Pareto frontier for freeing up temp data storage. The x-axis shows the median usage for the global storage normalized by the total original temp data usage across all jobs. The y-axis shows the median temp data saving also normalized by the original temp data usage. {\color{black}As we can see, adding more cuts is only helpful for large jobs (with >14 GB*Hour temp data usage), and not for all jobs. Therefore, we can have different checkpoint strategies for different types of jobs. %
	} 
	\item \textbf{There are jobs with ``free'' cuts.}
	Based on the IP formulation that inserts a high cost for global storage, we found 
	jobs with sub-graphs that are independent of each other, resulting in {\it free} checkpoints, i.e., not requiring any global storage.
\end{itemize}

\begin{figure}[t]
	\centering
	\includegraphics[width=0.87\columnwidth]{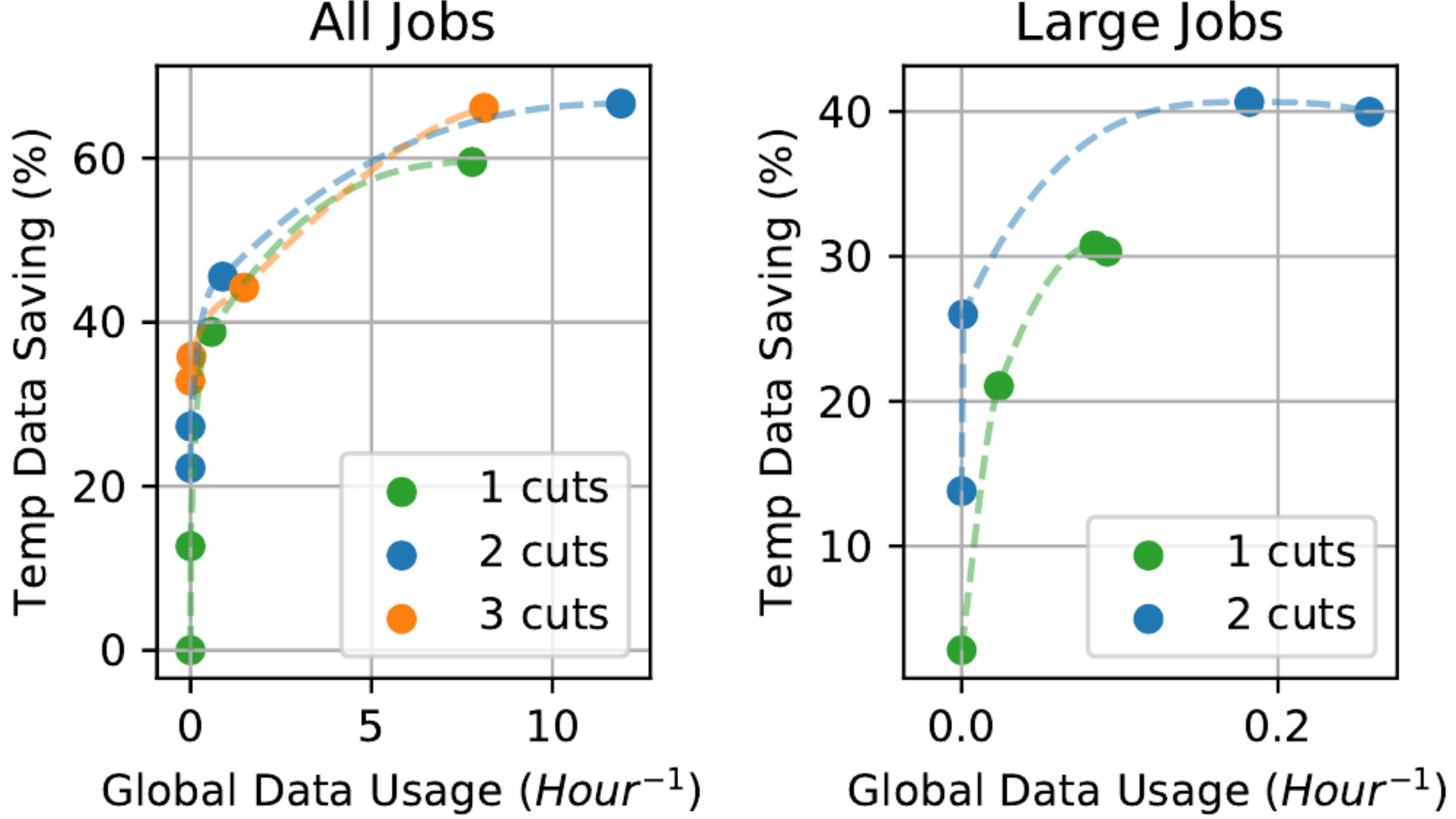}
	\vspace{-1.5ex}
	\caption{{\color{black}Pareto frontier for multiple cuts.}%
	}\label{fig:pareto}
\end{figure}

\begin{figure}[t]
	\centering
	\scalebox{0.9}{
	\begin{tikzpicture}  
	
	\begin{axis}  
	[  
	ybar,
	width=8.5cm,
	height=4.5cm, %
	enlargelimits=0.15,%
	legend style={at={(0.5,1.25)}, %
		anchor=north,legend columns=-1},     
	ylabel={Pct of Temp Data Saving (\%)},
	x tick label style={font=\small,text width=1.5cm,align=center},
	symbolic x coords={
		Random,
		MP, 
		OP, 
		OCC, 
		OML,
		OMLS,
		Optimal
	},  
	xtick=data,  
	nodes near coords,  
	nodes near coords align={vertical},  
	]  
	\addplot+[ error bars/.cd,
	y dir=both,
	y explicit relative]
	 coordinates {
	(Random, 35.612171) +- (0,0.043976)
	(MP, 52.74)+- (0,0.079886294)
	(OP, 54.991667)+- (0,0.047391)
	(OCC,  61.798333)+- (0,0.035442)
	(OML, 66.632484)+- (0,0.027596)
	(OMLS, 73.869763)+- (0,0.014726)
	(Optimal,  76.223741)+- (0,0.013294)
	}; %
	\end{axis}   
	\end{tikzpicture} } 
	\vspace{-0.3cm}
	\caption{{\color{black}Percentage of temp data storage saving for different checkpointing approaches.}}\label{fig:saving}
	\vspace{-0.3cm}
\end{figure}
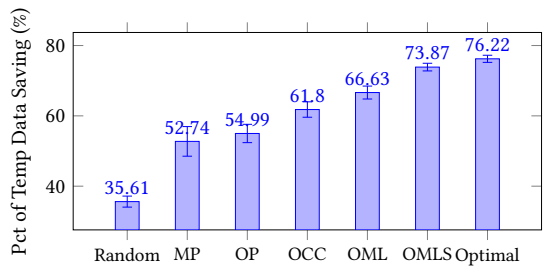

\subsubsection*{Optimal Checkpoint Time}
Figure~\ref{fig:saving} shows the daily average percentage of temp data storage saving for the different approaches based on 6-day's observations with approximately 405,000 jobs in one of the \cosmos clusters. The error bar shows the standard deviation. We consider the following approaches: 
\begin{enumerate}
	\item \textit{Random}: using a random checkpoint selector that randomly selects stages as the global checkpoints; 
	\item {\color{black}\textit{Mid-Point (MP)}: using the estimated stage scheduling from the job runtime simulator, cut the execution graph into two based on the \textit{mid-point} of the total job execution time; 
	} 
	\item \textit{Optimizer + Estimated Cost (OP)}: using the estimated cost from the query optimizer for the output size/execution time for each stage as the inputs to the job runtime simulator and the proposed optimizer; 
	\item \textit{Optimizer + Constant Cost (OCC)}: similar to \textit{Optimizer + Estimated Cost}, assuming a simple constant for the output size and execution time for each stage;
	\item \textit{Optimizer + ML Cost Models (OML)}: the proposed optimizer that uses ML predictions as inputs and uses the estimated TTL from the job runtime simulator;
	\item \textit{Optimizer + ML Cost Models + Stacking Model (OMLS)}: the proposed optimizer using the TTL from the stacking model;
	\item \textit{Optimal}: the optimal off-line checkpoint optimizer based on the knowledge of the true output size and TTL for all stages. 
	
\end{enumerate}

\begin{figure}[t]
	\centering
	\includegraphics[width=\columnwidth]{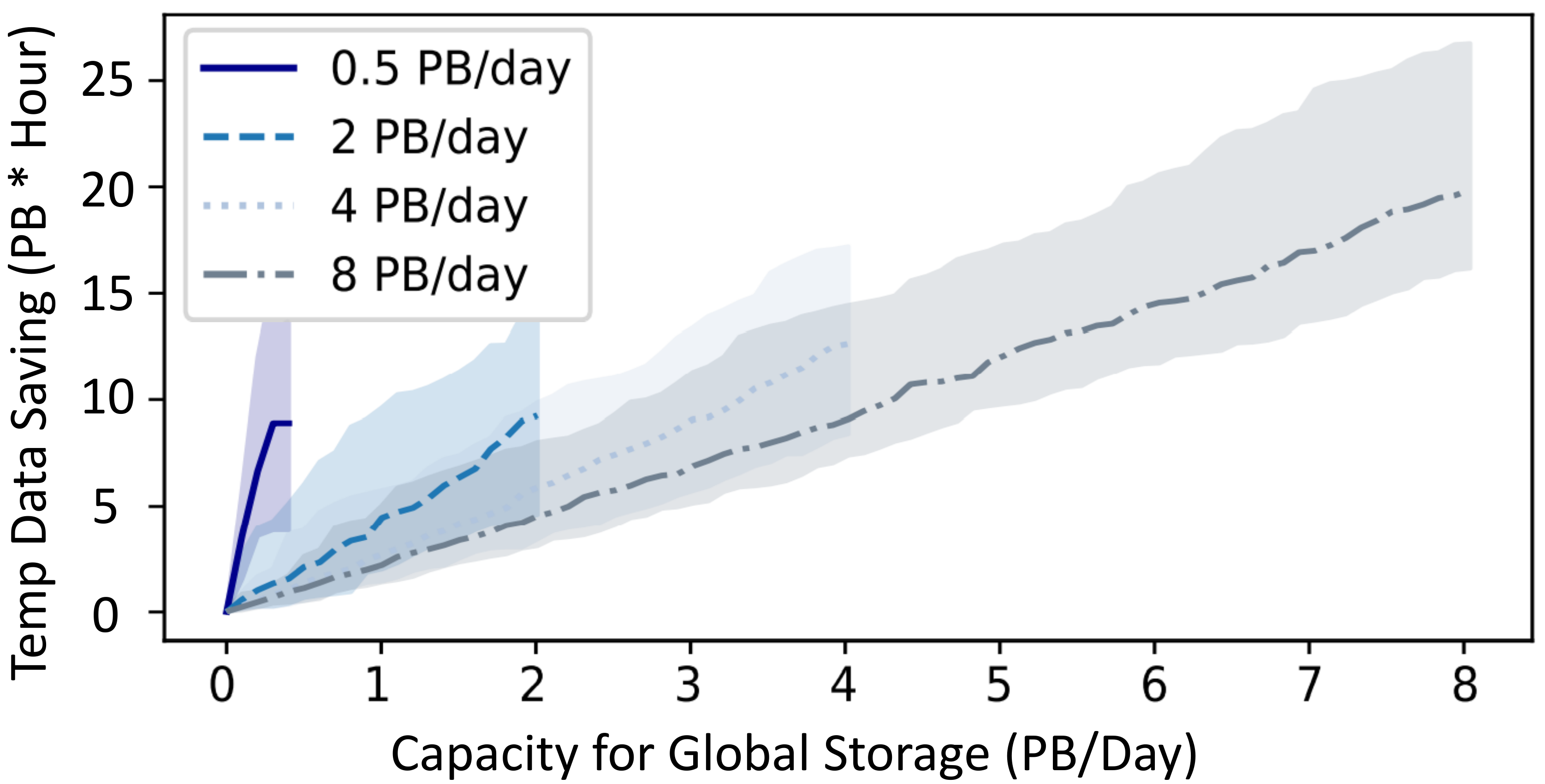}
	\vspace{-0.5cm}
	\caption{Cumulative temp data storage saving as a function of global storage used.}\label{fig:saving2}
	\vspace{-0.3cm}
\end{figure}

\begin{figure*}[t]
	\centering
    \includegraphics[width=0.83\textwidth]{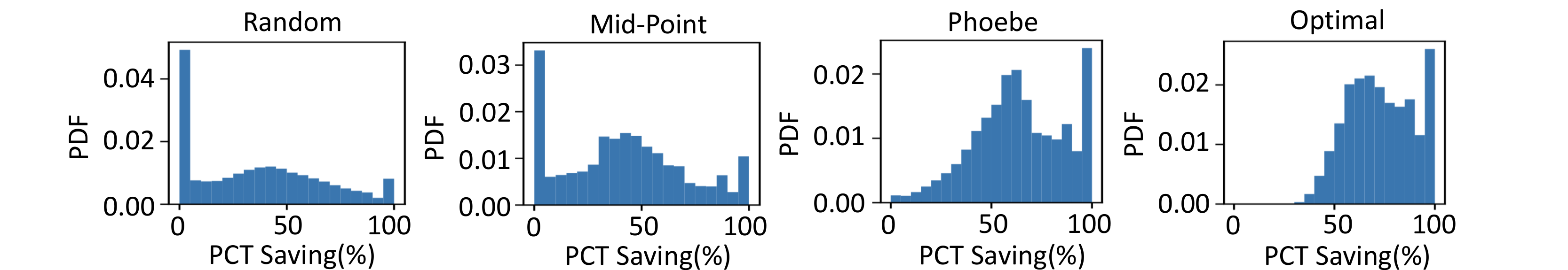}
	\vspace{-0.2cm}
	\caption{Distribution of expected recovery time saving (at the job level) for failed jobs.}\label{fig:rec79}
	\vspace{-0.1cm}
\end{figure*}

We can see that the random optimizer can only free up 36\% of temp data storage on average, measured by the portion of temp data storage that can be cleared in the unit of Petabytes*Hour (PB*Hour). 
By further improving on the estimation of the output size, our proposed optimizer without the stacking model can free up to 67\% of temp data storage on average, while with the stacking model the percentage further increases to 74\%. This is not very far from the optimal off-line optimizer, which frees up to 76\%. 
While we thought that using the cost from the optimizer (Estimated Costs) can improve the performance compared to using the constant cost, the corresponding temp data saving is actually  worse, which is due to the large errors in cost estimation from the optimizer. 
Thus, \textbf{using the job runtime simulator and learned stage-wise costs significantly increases the saving in temp data storage on hotspots.}

Figure~\ref{fig:saving2} shows the temp data saving with respect to the cumulative usage of the global data storage under different capacity constraints for the global storage. The $5^{\text{th}}$ and $95^{\text{th}}$ confidence level on the saving is also shown. We can see that with a larger capacity, the total temp data saving increases. However, because we are less selective about the jobs, the average value-to-weight ratio for the selected pool of jobs and the slope for the curve decreases. Therefore, to determine the best size of the capacity limit for the global storage, we should take into account the corresponding costs for the two types of storage and determine the optimal value.

\subsection{Restarting Failed Jobs}
We now present the experiment results for the heuristic method for maximizing the recovery time when restarting failed jobs.
We used 1-day's data with approximately 62,000 jobs, and we evaluate 4 algorithms to select the optimal checkpoints:
\begin{enumerate}
	\item \textit{Random}: using a random checkpoint selector that randomly selects stages as the global checkpoints; 
	\item \textit{Mid-Point}\footnote{{\color{black}Note that the cost-based approach proposed by \cite{salama2015cost} for minimizing the potential maximal total cost (execution cost and materialization cost for the dominant execution path) through enumeration is  infeasible in \cosmos because of the size of the DAGs and the presence of spool operators~\cite{leeka2019incorporating}. Mid-point is a simple heuristic that extracts the ``longest'' execution path according to the stage execution schedule, and picks the mid-point timestamp and the corresponding last finished stages as the checkpoints. This strategy maximizes the total cost reduction on the dominant path.}}: \rev{cut the execution graph based on the mid-point of the total job execution time based on estimated scheduling;}
	\item \textit{Phoebe}: the proposed method that uses the model prediction for the stage scheduling as well as the estimated probability of failures for each stage.
	\item \textit{Optimal}: the optimal off-line checkpoint optimizer based on the knowledge of the actual stage scheduling and failure probabilities for all stages. 
\end{enumerate}

Figure~\ref{fig:rec79} shows the distribution of the percentage of recovery time saving for the 4 algorithms at the job level. 
The average percentage saving for the expected recovery time is 36\% for the Random algorithm, 41\% for the Mid-Point algorithm, 64\% for Phoebe, and 73\% for the Optimal algorithm. We can see that by introducing the predictions on stage scheduling and the estimated failure probabilities, the recovery time saving improves. However, more work can be done to further improve the estimation accuracy for the TFS.

\subsection{Overheads and Production Test}
Phoebe adds the following overheads on top of the \scope compiler: (1)~metadata and model lookup:~$15$ms on average; (2)~scoring and optimization:~$1.09$s; and (3)~query optimization, which is negligible. As we materialize checkpoints by adding an additional stage that executes in parallel, the overhead for data writing is usually hidden by other parts of the job execution plan. 
In sum, we are expecting approximately ~1s overhead in total compared to several minutes of end-to-end job compilation, which is acceptable.

We deployed Phoebe in the production environment and applied the checkpoint mechanism to over 1000 random analytic jobs (514 hours of total job execution time). The median increase in latency was just 1.8\%. {\color{black}If a more constrained overhead is required, we can select simpler predictors as for example suggested in Figure~\ref{fig:saving}.} We tested on another 256 large jobs (with >1h job runtime). Figure~\ref{fig:perf} shows the distributions of percentage impact on latency and IO time. While the IO time for some jobs increased by >20\%, the median increase for latency is only 2.6\%. Among those long-running jobs, the average percentages of data checkpointed and temp data storage saved are 12.3\% and 48.6\% respectively. 

\begin{figure}[t]
	\centering
	\vspace{0.1cm}
			\includegraphics[width=\columnwidth]{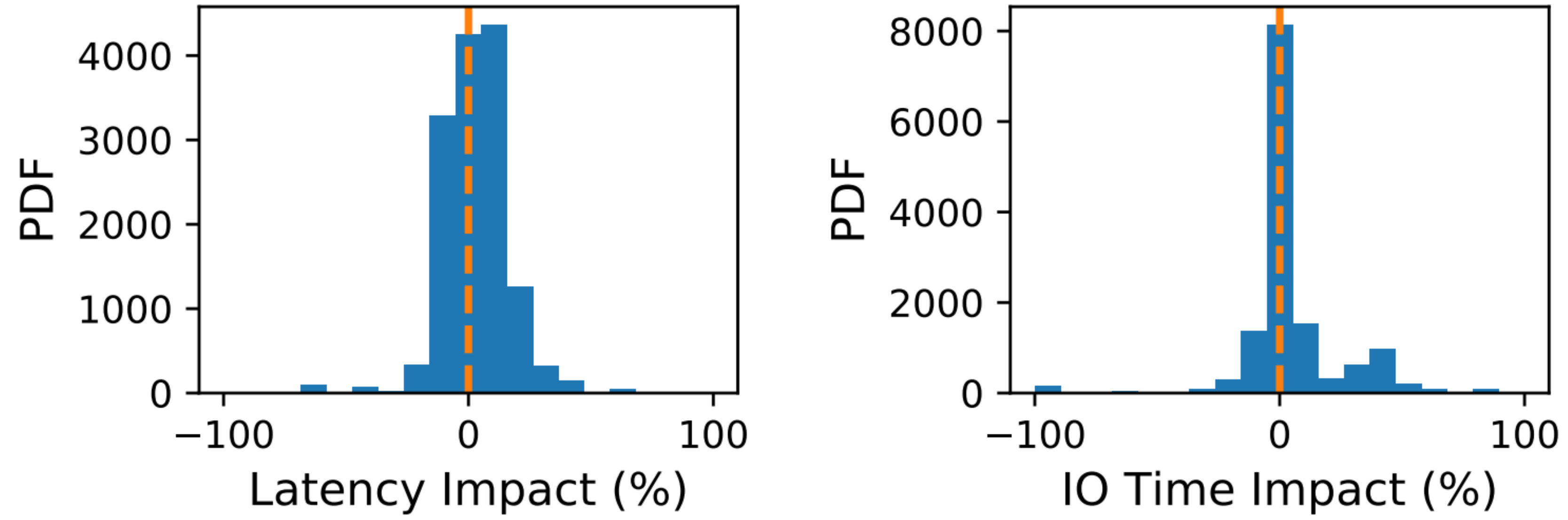}
	\vspace{-0.4cm}
	\caption{Checkpointing impact on latency and IO time.}\label{fig:perf}
	\vspace{-0.6cm}
\end{figure}

\subsection{Other Checkpoint Applications}

We present a couple of anecdotal evidence on how Phoebe can also help other checkpoint applications. 
For instance, in one of our new SKUs in \cosmos, the local SSDs are not scaled in the same ratio as CPU cores. As a result, accommodating similar volumes of temporary storage would preclude from leverage the full compute capacity. Using Phoebe, we could reduce the temp storage load, and help increase the number of containers per machine by up to $28\%$.
In another application, extremely large \scope jobs could run for several hours, making the query plan highly sub-optimal. 
Splitting one such large production job into smaller ones {\color{black} that have more accurate cost estimation thus better-optimized query plans }resulted in its runtime to decrease from $30+$ to $20+$ hours.

\section{Related Work}
\label{sec:related}

In this section, we describe the different checkpointing mechanisms and systems. 
We also discuss the recent machine learning based approaches that predict characteristics like cardinality, operator cost, and resource allocation for query plans.

\stitle{Runtime Checkpointing.} This set of techniques make checkpointing decisions while the job is running. They are dynamic and more suitable for black-box workloads.
By optimizing the checkpointing intervals~\cite{daly2006higher}, runtime checkpointing is easier to optimize and more applicable for transient resource environments. 
For example, Flint~\cite{sharma2016flint} and TR-Spark~\cite{yan2016tr} focused on optimizing the checkpoint intervals and the selection of servers/tasks for replication to minimize the application runtime.

\stitle{Compile-time Checkpointing.} This set of techniques leverage query characteristics and propose optimal checkpoints at the task or operator-level. 
For example, FTOpt~\cite{upadhyaya2011latency} developed a cost-based fault-tolerance optimizer to determine the (1) fault-tolerance strategy and (2) the frequency of checkpoints for each operator in a query plan. 
{\color{black}FTOpt works with non-blocking query plans only.
Osprey~\cite{yang2010osprey} proposed the intra-query checkpoint mechanism for a distributed database system, and preserved the intermediate outputs for all sub-queries.
But Osprey does not support general workloads with self-joins and nested queries.}
Similarly, ~\cite{chen2013selective} proposed a divide-and-conquer algorithm to solve a similar optimization problem. %
However, they do not consider temporary data saving costs or enforce any global storage constraints. 
The authors in~\cite{salama2015cost} describe a cost-based checkpoint approach 
by enumerating all possible query plans and materialization configurations.
Their algorithm relies on the cardinality estimates provided by the optimizer to find the query plans with the shortest dominant path under mid-query failures.
However, prior work~\cite{wu2018towards} has shown that big data query optimizers often under / over estimate cardinalities. 
They are also not applicable for distributed systems as they do not consider estimating costs for stages instead of operators. %
Production systems frequently encounter large jobs with hundreds or even thousands of stages, where the above algorithms prove to be inefficient.

\stitle{Checkpointing in streaming systems.} The checkpointing problem has also been studied in the context of stream processing systems.
In streaming systems, %
fault-tolerant operator implementations employ logging~\cite{balazinska2005fault}, replication~\cite{kwon2008fault}, and scaling out techniques~\cite{castro2013integrating} to reduce the cost of failure recovery.
In contrast, we employ a restart-after-failure mechanism that minimizes the wasted computation in large-scale analytical programs.

\vspace{0.2cm}
We now review prior work that uses machine learning for cost prediction and query optimization.

\noindent{\bf Cost Models and Query Optimization}
~\cite{marcus2019plan} proposed a \textit{plan-structured deep neural networks} to incorporate information from individual operator level to the full query plan and predict the job performance. 
Neo~\cite{marcus2019neo} used a similar DAG representation, the tree convolution~\cite{mou2016convolutional}, to predict the total execution latency for a given query plan. By searching over the space of the plans, the model discovered the optimal plan with the minimum expected execution time.
\cite{ortiz2018learning} introduced the \textit{subquery representation} for each state. The concept is similar to the \textit{internal neural unit}~\cite{marcus2019plan}, i.e., to concatenate the featurization of a sub-query
with a new operation for the representation of a larger sub-query. Unfortunately, DNN solutions are not suitable for larger complex query graphs, that are found in big data systems such as Cosmos, due to the fact that many CNN layers will be stacked, {\color{black}which results in gradient vanishing or explosion. This is the same problem faced by DNN models before residual/highway connections were introduced~\cite{resnet,hsu2016exploiting}. Whether a similar approach applied in our case is future work.~\cite{kaoudi2020ml} proposed to featurize query plans using topological features which doesn't capture the detailed dependency between operator and stages, therefore is too general to capture the heterogeneity among complex query plans.
Similar to CLEO~\cite{siddiqui2020cost},~\cite{akdere2012learning} used prediction of child operators as input to ML models to predict the parent operators' cost. This model also leads to error prorogation that is not suitable for large query plans.}
ML has also been used to improve the estimation for cardinality~\cite{wu2018towards,dutt2019selectivity,MCSN}.
Other works have considered persisting intermediate results that overlap across queries~\cite{singh2016progressive,jindal2018selecting} for computation reuse.
In contrast, we focus on stage-wise cost models.
To the best of our knowledge, Phoebe is the first system to propose cost models at the stage-level in a large distributed system with large, complex production workloads.

\eat{
\stitle{Capacity Constraints and the Knapsack Problem.}
\cite{zhou2008budget} proposed an \textit{Online-KP-Threshold} algorithm that is upper and lower bounded for the competitive ratio based on the range of value-to-weight ratio.
However, in our case, as the distribution for the value-to-weight ratio has a long tail (up to multiple scales), the bound is not satisfactory. 
~\cite{bello2016neural} proposed to use reinforcement learning to solve the combinatorial optimization problem such as the Travel Salesman Problem (TSP) and the knapsack problem. 
Based on known distribution(s) for the value and weight, \cite{kleywegt1998dynamic} and \cite{kleywegt2001dynamic} introduced Markov Dynamic Programming (MDP) to solve the online stochastic knapsack problem, and the optimal policy of selecting the items is based on a threshold as a function of time and item value/weight. However, the policy requires solving a partial differential equation and known distributions for the value/weight. In our use cases, as the value and weight is unknown at the compile time, the decision is made based on the estimation. Therefore, the assumptions in~\cite{kleywegt2001dynamic} no longer hold. 
}

\vspace{-0.2cm}

\section{Conclusion}\label{sec:conclude}

This paper revisits the checkpointing problem in the context of big data workloads, with complex query DAGs.
We introduce Phoebe, a learning-based checkpoint optimizer to 
select the optimal set of stages to checkpoint, subject to different objective functions and storage constraints.
Phoebe leverages multiple machine learning models built upon state-of-the-art cost models at the operator level, to accurately predict the cost of each stage, including the execution time, the output size, and the time-to-live. Based on the estimated costs, Phoebe's checkpoint optimizer selects the optimal set of stages that maximize the checkpointing objectives, such as freeing up temp data on local SSDs, or recovery time for failed jobs. 
We validated Phoebe using production workloads, with results showing that Phoebe is able to free up >70\% of temp data storage on hotspots, improve the recovery time of failed jobs by >60\% on average, and yet having acceptably low performance impact (less than 3\%).

\bibliographystyle{ACM-Reference-Format}
\bibliography{sample}

\end{document}